\numberwithin{equation}{section}
\newtheorem{theorem}{Theorem}[section]
\newtheorem{remark}[theorem]{Remark}
\newtheorem{corollary}[theorem]{Corollary}
\newtheorem{proposition}[theorem]{Proposition}
 \newcommand{\Pf}{\text{Pf}}
 \newcommand{\al}{\alpha}
 \newcommand{\alm}{\alpha^{(m)}}
 \newcommand{\be}{\beta}
 \newcommand{\bem}{\beta^{(m)}}
 \newcommand{\adm}{\langle \mathcal{A}\rangle}
\newcommand{\sumsigma}{\sum_{\sigma}\text{sgn}\sigma}
\newcommand{\ho}{h^{(1)}}
\newcommand{\hf}{h^{(4)}}
\newcommand{\xim}{\xi^{(m)}}
\newcommand{\etam}{\eta^{(m)}}
\newcommand{\V}{V_1,V_2}
\newcommand{\I}{\tilde{I}}
\begin{document}

\title[Classical skew orthogonal polynomials in a two-component log-gas]{Classical skew orthogonal polynomials in a two-component log-gas with charges $+1$ and $+2$}

\author{Peter J. Forrester}
\address{School of Mathematical and Statistics, ARC Centre of Excellence for Mathematical and Statistical Frontiers, The University of Melbourne, Victoria 3010, Australia}
\email{pjforr@unimelb.edu.au}

\author{Shi-Hao Li}
\address{ School of Mathematical and Statistics, ARC Centre of Excellence for Mathematical and Statistical Frontiers, The University of Melbourne, Victoria 3010, Australia}
\email{shihao.li@unimelb.edu.au}

\subjclass[2010]{15B52, 15A15, 33E20}
\date{}

\dedicatory{}

\keywords{classical skew orthogonal polynomials; solvable one-dimensional two-component log-gases; hypergeometric orthogonal polynomials}

\begin{abstract}
 There is a two-component log-gas system with Boltzmann
factor which provides an interpolation between the eigenvalue PDF
for $\beta = 1$ and $\beta = 4$ invariant random matrix ensembles.
Its solvability relies on the construction of
particular skew orthogonal polynomials, with the skew inner product
a linear combination of the $\beta = 1$ and $\beta = 4$ inner products,
each involving weight functions. For suitably related classical weight functions,
we seek to express the skew orthogonal polynomials as linear combinations of
the underlying orthogonal polynomials. It is found that in each case
(Gaussian, Laguerre, Jacobi and generalised Cauchy) the coefficients can be
expressed in terms of hypergeometric polynomials with argument relating to the
fugacity. In the Jacobi
case, for example,  the coefficients are Wilson polynomials.
\end{abstract}
\maketitle

\section{Introduction}
\subsection{The circular ensembles and two-component log-gases}
It has been emphasized by Dyson \cite{dyson62} that for general $\beta>0$, the probability density function (PDF) on the unit circle proportional to
\begin{align}\label{1.1}
\prod_{1\leq j<k\leq N}|e^{i\theta_k}-e^{i\theta_j}|^\beta
\end{align}
can be interpreted as the Boltzmann factor for a classical statistical mechanical system with $N$ particles on a unit circle interacting pairwise via the potential $-\log|e^{i\theta_j}-e^{i\theta_k}|$.
This system is usually referred to as the one-component log-gas. Later on, Sutherland \cite{sutherland71} considered the Hamiltonian for the quantum many body system
\begin{align}\label{1.1a}
H=-\sum_{j=1}^N\frac{\partial^2}{\partial \theta_j^2}+\frac{\beta}{4}\Big (\frac{\be}{2}-1 \Big )\sum_{1\leq j<k\leq N}\frac{1}{\sin^2\pi(\theta_k-\theta_j)/2}
\end{align}
and showed that up to normalisation, (\ref{1.1}) is the absolute value squared of the corresponding ground state wave function.

Dyson's concern in  \cite{dyson62} was with the cases $\beta = 1, 2$ and 4 in the context of random matrix theory. In the case $\beta = 2$ it had been shown earlier by Weyl that (\ref{1.1}) corresponds
to the eigenvalue PDF of matrices from the unitary group $U(N)$ chosen with Haar measure. Dyson showed that the eigenvalue PDF of matrices $U U^T$, with
$U \in U(N)$ is given by the case $\beta = 1$. He showed too that for $U \in U(2N)$ the eigenvalue PDF of matrices $U U^D$, where $D$ denotes the dual specified by
$U^D = Z_{2N} U^T Z_{2N}^{-1}$ with $Z_{2N} = I_N \otimes i \sigma_y$ ($\sigma_y$ the appropriate Pauli matrix), is given by the case $\beta = 4$. Here the spectrum
is doubly degenerate and the PDF refers only to the independent eigenvalues. See the recent review \cite{DF17} for more details and motivations.

Starting with (\ref{1.1}) in the cases $\beta = 1, 2$ and 4, a different generalisation to regarding $\beta$ as continuous
is to consider a two-component interpolation between $\beta=1$ and $4$. This is motivated by the fact that the logarithmic potential is the solution of the
Poisson equation in two-dimensions, allowing the particles in the classical statistical mechanical interpretation of (\ref{1.1}) to
be thought of as charges. With the inverse temperature fixed to unity, the particles in (\ref{1.1}) for $\beta = 1$ have charge $+1$, while those in
(\ref{1.1}) for $\beta = 4$ have charge $+2$ --- the pair potential between particles of charges $q_1$ and $q_2$ confined to the unit
circle embedded in two-dimensions is $- q_1 q_2 \log | e^{i \theta_1} - e^{i \theta_2} |$. A natural two-component generalisation is thus
 the PDF proportional to \cite{forrester84}
\begin{align}\label{1.2}
\prod_{1\leq j<k\leq N_1}|e^{i\theta_k}-e^{i\theta_j}|\prod_{1\leq\al<\be\leq N_2}|e^{i\phi_\al}-e^{i\phi_\be}|^4\prod_{j=1}^{N_1}\prod_{\al=1}^{N_2}|e^{i\theta_j}-e^{i\phi_\al}|^2,
\end{align}
which can be viewed as the Boltzmann factor for a two-component log-gas on the unit circle, with $N_1$ particles of charge $+1$ and $N_2$ particles of charge $+2$.  
Since the charges are of the same sign, this two-component log-gas may also be referred to as a two-component plasma.
It is immediate that when $N_2=0$ (resp. $N_1=0$), this model 
corresponds to (\ref{1.1}) with $\beta=1$ (resp. $\beta=4$). It is furthermore the case that (\ref{1.2}) corresponds to the ground state wave function of a two-component generalisation of the
quantum many body Hamiltonian (\ref{1.1a}) \cite{Fo84a}.

Explicit evaluations of the partition function and the two-point charge--charge correlation functions corresponding to (\ref{1.2}) were given in \cite{forrester84}. These explicit formulas allowed
the free energy in the thermodynamic limit to be computed, as well as the evaluation of the bulk scaling limit of the two-point charge--charge correlation functions; see also \cite{Fo84b}.
Subsequently, it was found that the general $(k_1,k_2)$-point correlation function could be expressed as a quaternion determinant (or equivalently a Pfaffian) \cite[Equation (6.168)]{forrester10}.
Well established theory relating to the correlation functions for (\ref{1.1}) with $\beta = 1$ and 4 \cite[\S 6]{forrester10} (see also Section 1.2 below) implies that there is a family of skew orthogonal polynomials underpinning this result. Appreciating this point, the authors of~ \cite{Rider12} considered a variant of (\ref{1.2}) on the real line specified by the
Boltzmann factor
\begin{align}\label{1.3}
\prod_{1\leq j<k\leq N_1}|\al_k-\al_j|\prod_{1\leq j<k\leq N_2}|\be_k-\be_j|^4\prod_{j=1}^{N_1}\prod_{k=1}^{N_2}|\al_j-\be_k|^2e^{-V(\al_j)-2V(\be_k)} 
\end{align}
with a Gaussian weight $V(x)=x^2/2$. It was found that the corresponding skew orthogonal polynomials can be expressed as a linear combination of Hermite polynomials with coefficients being particular Laguerre polynomials. Asymptotic results relating to the partition function and one-point correlation functions were given; see too the recent work \cite{borgo18}. This two-component log-gas plays a key role in the recent study \cite{feng19} relating to small gaps for the Gaussian orthogonal ensemble. Furthermore, excluding the weight factors, it is noted in \cite{chen06} that \eqref{1.3} results as the eigenvalue Jacobian for real symmetric matrices with a prescribed number of double degeneracies. In the present work, it is the skew orthogonal
polynomials which are our focus.

\subsection{Skew orthogonal polynomials arising from log-gases}
Since this paper is about the skew orthogonal polynomials arising from two-component log-gases (\ref{1.3}), 
it is appropriate to first revise how skew orthogonal polynomials appear in the corresponding one-component 
cases $N_1 = 0$ and $N_2=0$. In these cases, (\ref{1.3}) is proportional to the eigenvalue PDF for invariant ensembles of 
real symmetric matrices ($N_2=0$), and self dual Hermitian matrices $(N_1 = 0)$. With this interpretation,
the skew orthogonal polynomials most simply occur
 as the averaged characteristic polynomials \cite[\S 5]{metha04}. For $\beta=1$, the monic skew  orthogonal polynomials $\{p^{(1)}_j(x)\}_{j=0}^\infty$ are given by (up to a shift with arbitrary constant $c_n$)
\begin{align}\label{1.3a}
p^{(1)}_{2n}(x)=\langle \det(x-M)\rangle_{2n}^{(1)},\quad p^{(1)}_{2n+1}(x)=\langle (x+\text{Tr}M+c_n)\det(x-M)\rangle_{2n}^{(1)},
\end{align}
while for $\beta=4$, the monic skew  orthogonal polynomials $\{p^{(4)}_j(x)\}_{j=0}^\infty$ are given by
\begin{align}\label{1.3b}
p^{(4)}_{2n}(x)=\langle \det(x-M)\rangle_{n}^{(4)},\quad p^{(4)}_{2n+1}(x)=\langle (x+\text{Tr}M+c_n)\det(x-M)\rangle_{n}^{(4)}.
\end{align}
The label in these latter averages relates to the size of $M$ regarded as having quaternion entries. In practice each quaternion is represented as a particular $2 \times 2$ complex matrix
(see e.g.~\cite[\S 1.3.2]{forrester10}), so the sizes are consistent with the $\beta = 1$ case. The quantities in these averages depend only on the eigenvalues of $M$.
Consequently, they can each be written in terms of integrals over the corresponding eigenvalue PDFs. Thus,
\begin{align*}
p^{(1)}_{2n}(x) & = {1 \over C_{N,1}} \int_{I^n}  \prod_{1 \le j < k \le 2n} | x_k - x_j|\prod_{l=1}^{2n} (x - x_l)e^{-V(x_l)}dx_l, \\
p^{(1)}_{2n+1}(x)&= {1 \over C_{N,1}} \int_{I^n}  \prod_{1 \le j < k \le 2n} | x_k - x_j|(x+\sum_{l=1}^{2n} x_l+c_n)\prod_{l=1}^{2n} (x - x_l)e^{-V(x_l)}dx_l,\\
p^{(4)}_{2n}(x) & = {1 \over C_{N,4}} \int_{I^n}  \prod_{1 \le j < k \le n} ( x_k - x_j )^4\prod_{l=1}^{n} (x - x_l)^2 e^{-2V(x_l)}dx_l,\\
p^{(4)}_{2n+1}(x) & = {1 \over C_{N,4}} \int_{I^n}  \prod_{1 \le j < k \le n} ( x_k - x_j )^4(x+2\sum_{l=1}^{n}x_l+c_n)\prod_{l=1}^{n} (x - x_l)^2 e^{-2V(x_l)}dx_l,
\end{align*}
where $C_{N,\beta}$ denotes the normalisation, and $I$ the support of the spectrum.

It was demonstrated in \cite{dyson70} that the eigenvalue correlation functions of an invariant ensemble random matrix with $\beta=1$ or 4 can be written as a quaternion 
(or equivalently, Pfaffian), specified by a particular $2 \times 2$ correlation kernel.
Moreover, in \cite{mahoux91}, it was shown that the correlation kernel can be expressed in terms of the skew orthogonal polynomials.
Therefore, the problem of evaluating correlation kernels with $\beta=1,4$ is transformed into the evaluation of the skew orthogonal polynomials.
Much earlier it was recognised that the eigenvalue correlation  functions for an invariant ensemble with $\beta = 2$ can be written as a determinant specified by
a scalar correlation kernel expressible in terms of orthogonal polynomials (see e.g.~\cite[\S 5]{forrester10}).
Eventually this prompted the study of inter-relations between unitary matrix ensembles and orthogonal/symplectic matrix ensembles  \cite{adler00,nagao91,widom99, FR01}, and discrete analogues \cite{borodin09,Forrester19}.
For so called classical weight functions (see e.g.~\cite[\S 5.4.3]{forrester10}), it was demonstrated that the skew orthogonal polynomials can be written as a linear combination of corresponding
classical orthogonal polynomials. It was furthermore shown that the correlation kernel with $\beta=1,4$ can be written in terms of a rank one perturbation of a corresponding correlation kernel with $\beta=2$, the evaluation of the latter being well known from the Christoffel-Darboux summations. And the partition function, given generally as a product over the normalisation of
the skew orthogonal polynomials, permits in the classical cases evaluation in terms of
quantities relating to the orthogonal polynomial basis. Hence we are
motivated to find an expansion of the skew-orthogonal polynomials for
the two-component log-gas in terms of the underlying orthogonal polynomials.

\subsection{Main results}
The family of skew orthogonal polynomials corresponding to the Boltzmann factor \eqref{1.3} is skew orthogonal with respect to a particular skew symmetric inner product $\langle \cdot,\cdot\rangle_{X,\V}$, which is a linear combination of the $\langle\cdot,\cdot\rangle_{1,V_1}$ and $\langle\cdot,\cdot\rangle_{4,V_2}$. Since in \cite{Rider12} it has been shown that for the Gaussian weight, the skew orthogonal polynomials with $\langle \cdot,\cdot\rangle_{X,\V}$ can be written as a linear combination of Hermite polynomials with coefficients expressed by Laguerre polynomials (the latter have argument in terms of the fugacity---below a grand canonical formalism will be introduced), we pose the question as to whether all of the classical skew orthogonal polynomials under $\langle \cdot,\cdot\rangle_{X,\V}$ can be related to classical orthogonal polynomials. An affirmative answer is given in Proposition \ref{prop4}, showing that classical skew orthogonal polynomials under $\langle \cdot,\cdot\rangle_{X,\V}$ can be constructed from the classical orthogonal polynomials due to the Pearson equation. There are four cases to consider: the Gaussian, Laguerre, Jacobi and (generalised) Cauchy weight. We show that the hypergeometric orthogonal polynomials in the Askey-scheme appear when expressing the coefficients of the linear combination, extending the result of the Gaussian case for which the coefficients are given by  Laguerre polynomials (i.e. $_1F_1$ hypergeometric function). In particular, special attention is paid to the so-called generalised Cauchy weight in the form $(1-ix)^{c}(1+ix)^{\bar{c}}$ with $x\in\mathbb{R}$ and $c\in\mathbb{C}$. The model with the generalised Cauchy weight on the real line can be transformed into a generalised two-component log-gases with Jacobi weight on the unit circle under a stereographic projection, and therefore relates to the original work \cite{forrester84}.

\section{Skew orthogonal polynomials arising from a solvable mixed charge model}\label{sec2}
In this section, we present the general formalism showing how skew orthogonal polynomials enter into the computation of the partition function for the
mixed charge model (\ref{1.3}). As has been pointed
out above, the Gaussian weight case has previously been considered in \cite{Rider12}; our formalism works for general classical weights.

\subsection{A solvable mixed charge model}
Let us consider a statistical mechanical system of particles corresponding to two-dimensional charges (logarithmic potential) confined to a line.
Specifically, let there be $L$ particles of charge $+1$ and $M$ particles of charge $+2$, so that the total charge is $L+2M=K$. For convenience, we will assume that the total charge $K=2N$ is even, which implies $L$ is even too. The configuration space $\xi$ of this 1-dimensional system is a subspace of $\mathbb{R}^L\times \mathbb{R}^M$, defined as $\xi=\{(\xi_1;\xi_2)=(\alpha_1,\cdots,\alpha_L;\beta_1,\cdots,\beta_M),\, \xi_1\in\mathbb{R}^L,\,\xi_2\in\mathbb{R}^M\}$, where $\{\alpha_i\}_{i=1}^L$ and $\{\beta_j\}_{j=1}^M$ represent the locations of the charge $+1$ and $+2$ particles respectively. The particles are each to subject to an external confining potential.
The total energy of this model is given by
\begin{align*}
E(\xi_1;\xi_2)&=
\sum_{1\leq j<k\leq L}\log|\al_k-\al_j|+4\sum_{1\leq j<k\leq M}\log|\be_k-\be_j|\\&+2\sum_{j=1}^L\sum_{k=1}^M\log|\al_j-\be_k|
-\sum_{j=1}^L V_1(\al_j)-2\sum_{k=1}^M V_2(\be_k),
\end{align*}
where the first three terms correspond to the potential energy of the state $\xi=(\xi_1;\xi_2)$ and the last two are due to the external confining potentials $V_1,V_2:\,\mathbb{R}\to[0,\infty)$. Note that the potential functions of the two 
species are not assumed to be the same (cf.~\cite{Rider12}).

The partition function $Z_{L,M}$ of a fixed pair of species numbers $(L,M)$ is specified by
\begin{align*}
Z_{L,M}=\frac{1}{L!M!}\int_{\mathbb{R}^L\times\mathbb{R}^M}e^{-E(\xi_1;\xi_2)}d\mu^L(\al)d\mu^M(\be)
\end{align*}
with Boltzmann factor 
\begin{align}
\begin{aligned}\label{vandermonde}
e^{-E(\xi_1;\xi_2)}&=\prod_{1\leq j<k\leq L}|\al_k-\al_j|\prod_{1\leq j<k\leq M}|\be_k-\be_j|^4\prod_{j=1}^L\prod_{k=1}^M|\al_j-\be_k|^2 e^{-V_1(\al_j)-2V_2(\be_k)}\\
&=\left|\det\left(\alpha_j^ke^{-V_1(\alpha_j)}\quad \beta_{j'}^ke^{-V_2(\be_{j'})}\quad k\beta_{j'}^{k-1}e^{-V_2(\be_{j'})}\right)\right|_{\mbox{\tiny$\begin{array}{c}
j=1,\cdots,L,\\
j'=1,\cdots,M,\\
k=0,\cdots,2N-1.\end{array}$}}
\end{aligned}
\end{align}
The second line in (\ref{vandermonde}) follows from a confluent form of the Vandermonde determinant identity; see \cite{forrester84}. 

To most clearly exhibit solvable structures, we do not consider the partition function directly, but rather first form 
a particular  grand canonical ensemble, obtained by simultaneously considering all systems with species numbers
$L$ and $M$ such that $L+2M=2N$. To distinguish the species number making up the grand canonical ensemble,
we introduce a fugacity (generating function parameter) $X$ in the specification of the corresponding
grand canonical partition function,
\begin{align*}
Z_N(X)=\sum_{\adm}X^LZ_{L,M}=\sum_{\adm}\frac{X^L}{L!M!}\int_{\mathbb{R}^L\times\mathbb{R}^M}e^{-E(\xi_1;\xi_2)}d\mu^L(\al)d\mu^M(\be),
\end{align*}
where here  $\adm=\{(L,M)\,|\, L+2M=2N,\,L,M\in\mathbb{Z}_{\geq0}\}$.

First we will demonstrate that the partition function $Z_N(X)$ has a Pfaffian representation.

\begin{proposition}\label{prop1}
The partition function $Z_N(X)$ can be written as a Pfaffian 
\begin{align}\label{partition}Z_N(X)=2^N\Pf(m_{i,j}(X))_{i,j=0}^{2N-1},
\end{align} 
where the moments $m_{i,j}(X)$ are defined as
\begin{align}\label{skewinnerproduct}
m_{i,j}(X)=\frac{X^2}{2}\int_{\mathbb{R}\times \mathbb{R}}y^iz^j\text{sgn}(z-y)e^{-V_1(y)-V_1(z)}d\mu(y)d\mu(z)+\frac{j-i
}{2}\int_{\mathbb{R}}y^{i+j-1}e^{-2V_2(y)}d\mu(y).
\end{align}
\end{proposition}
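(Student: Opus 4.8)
The plan is to reduce $Z_N(X)$ to the top-degree component of a wedge power in the exterior algebra $\Lambda^\bullet(\mathbb{R}^{2N})$, with basis $e_0,\dots,e_{2N-1}$ indexed by the column label $k$, and then to read off \eqref{partition} from the standard identity relating the $N$-th wedge power of a bivector to a Pfaffian. First I would dispose of the modulus in \eqref{vandermonde}. The explicit product on the first line of \eqref{vandermonde} shows that the determinant equals $\prod_{1\le j<k\le L}(\al_k-\al_j)$ times manifestly positive factors, so it keeps a fixed sign on the ordered domain $\al_1<\cdots<\al_L$, $\be_1<\cdots<\be_M$; using the permutation symmetry of $|\det|$ to absorb $1/(L!M!)$, each sector $X^LZ_{L,M}$ then equals $X^L$ times the integral of the signed determinant over this ordered domain.

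I would then integrate the $2N\times2N$ determinant by viewing each row as a vector in $\mathbb{R}^{2N}$, so that the determinant is the top-form coefficient of the wedge of all its rows. Two ingredients are needed. For the charge $+1$ rows $a(y)=\sum_k y^k e^{-V_1(y)}e_k$, a de Bruijn-type integration casts the ordered integral as a power of the bivector
\begin{align*}
\Omega_1=\int\!\!\int_{y<z}a(y)\wedge a(z)\,d\mu(y)d\mu(z)=\sum_{i<j}\left(\int\!\!\int y^iz^j\,\text{sgn}(z-y)e^{-V_1(y)-V_1(z)}d\mu(y)d\mu(z)\right)e_i\wedge e_j,
\end{align*}
whose coefficients are the first integral in \eqref{skewinnerproduct}. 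For each charge $+2$ particle the two confluent rows $b(y)=\sum_k y^k e^{-V_2(y)}e_k$ and $c(y)=\sum_k k y^{k-1}e^{-V_2(y)}e_k$ antisymmetrise to the bivector
\begin{align*}
\Omega_2=\int b(y)\wedge c(y)\,d\mu(y)=\sum_{i<j}(j-i)\left(\int y^{i+j-1}e^{-2V_2(y)}d\mu(y)\right)e_i\wedge e_j,
\end{align*}
whose coefficients are the second integral in \eqref{skewinnerproduct}; this short confluent computation is exactly what produces the derivative row in \eqref{vandermonde} and the factor $j-i$.

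Combining these, the ordered sector integral carrying the fugacity becomes $\frac{1}{(L/2)!\,M!}(X^2\Omega_1)^{\wedge L/2}\wedge\Omega_2^{\wedge M}$, the factorials being supplied by the two ordered integrations. Since $\Omega_1,\Omega_2$ are bivectors they commute under $\wedge$, so summing over all $(L,M)\in\adm$ and writing $p=L/2$, $q=M$ with $p+q=N$ collapses the grand canonical sum by the binomial theorem,
\begin{align*}
Z_N(X)=\text{top-form coefficient of }\ \frac{1}{N!}\sum_{p+q=N}\binom{N}{p}(X^2\Omega_1)^{\wedge p}\wedge\Omega_2^{\wedge q}=\frac{1}{N!}\big(X^2\Omega_1+\Omega_2\big)^{\wedge N}\Big|_{\mathrm{top}}.
\end{align*}
The bivector $X^2\Omega_1+\Omega_2$ has coefficients $2m_{i,j}(X)$, and the identity $\frac{1}{N!}\big(\sum_{i<j}A_{ij}e_i\wedge e_j\big)^{\wedge N}=\Pf(A)\,e_0\wedge\cdots\wedge e_{2N-1}$ yields $Z_N(X)=\Pf(2m_{i,j}(X))=2^N\Pf(m_{i,j}(X))$, which is \eqref{partition} and pins down the prefactor $2^N$.

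The step I expect to be the main obstacle is the sign and factorial bookkeeping in the charge $+1$ integration: one must verify that the ordered integral really equals $\frac{1}{(L/2)!}\Omega_1^{\wedge L/2}$, equivalently that integrating the totally antisymmetric wedge over the domain $\{\al_1<\al_2,\dots,\al_{L-1}<\al_L\}$ contributes the combinatorial factor $(L/2)!$, and that the fixed sign arising when the modulus is removed, together with the row-ordering convention used to form the wedge, is consistently $+1$. Once this partial de Bruijn identity and its sign are secured, the confluent contribution $\Omega_2$ and the final binomial/Pfaffian assembly are routine.
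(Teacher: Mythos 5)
Your argument is correct, and it reaches the same Pfaffian by the same two essential ingredients as the paper --- a de Bruijn-type integration for each species and a Pfaffian summation over the grand canonical ensemble --- but packaged in a recognizably different formalism. The paper first Laplace-expands the confluent Vandermonde determinant into products of an $\alpha$-determinant and a $\beta$-determinant, applies de Bruijn's integration formula to each factor to get a product of two Pfaffians $\Pf(X^2\tilde m_{\sigma(i),\sigma(j)})\,\Pf(\hat m_{\sigma'(i),\sigma'(j)})$, and then invokes Stembridge's minor-summation lemma to collapse the sum over $\sigma$ and over $(L,M)$ with $L+2M=2N$ into $\Pf(X^2\tilde m+\hat m)$. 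In your exterior-algebra version the Laplace expansion and Stembridge's lemma are replaced by the single observation that bivectors commute, so that $\tfrac{1}{N!}(X^2\Omega_1+\Omega_2)^{\wedge N}$ expands by the binomial theorem into exactly the sectors $\tfrac{1}{p!\,q!}(X^2\Omega_1)^{\wedge p}\wedge\Omega_2^{\wedge q}$; this makes the grand-canonical collapse self-evident rather than a citation, at the cost of having to verify by hand the identity $\int_{\alpha_1<\cdots<\alpha_L}a(\alpha_1)\wedge\cdots\wedge a(\alpha_L)=\tfrac{1}{(L/2)!}\Omega_1^{\wedge L/2}$, which you rightly flag as the delicate step --- it is precisely de Bruijn's formula restated (one can check, e.g., for $L=4$ that the six sign-weighted interleavings of two ordered pairs sum to $2$), and the confluent $\beta$-contribution and the final identity $\tfrac{1}{N!}\bigl(\sum_{i<j}A_{ij}e_i\wedge e_j\bigr)^{\wedge N}=\Pf(A)\,e_0\wedge\cdots\wedge e_{2N-1}$ are as routine as you say. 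Your bookkeeping of the coefficients is also right: $X^2\Omega_1+\Omega_2$ has entries $2m_{i,j}(X)$, whence the prefactor $2^N$, matching the paper's step $2m_{i+1,j+1}(X)=X^2\tilde m_{i,j}+\hat m_{i,j}$. The only loose end, common to both write-ups, is the overall sign when the modulus is removed on the ordered domain, which is constant and can be fixed to $+1$ by checking a single configuration.
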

\begin{proof}
We begin by performing a Laplace expansion (see e.g.~\cite[\S 3.3]{vein}) of the confluent Vandermonde determinant in \eqref{vandermonde} to express the determinant as sums in terms of determinants of species $\al$ and $\be$ respectively, 
\begin{align*}
e^{-E(\xi_1;\xi_2)}&=\sum_{\sigma}\text{sgn}\sigma\cdot \left|\det\left(\al_j^{\sigma(\eta)}e^{-V_1(\al_j)}\right)_{\mbox{\tiny$\begin{array}{c}
j,\eta=1,\cdots,L.
\end{array}$}}\right|\\
&\times\left|\det\left(\beta_{j'}^{\sigma'(\eta')}e^{-V_2(\be_{j'})},\,\sigma'(\eta')\beta_{j'}^{\sigma'(\eta')-1}e^{-V_2(\be_{j'})}\right)_{\mbox{\tiny$\begin{array}{c}
j'=1,\cdots,M,\\
\eta'=1,\cdots,2M
\end{array}$}}\right|,
\end{align*}
where $\sigma$ is injective map from $\{1,\,\cdots,\,L\}\mapsto\{1,\,\cdots,\,2N\}$ satisfying $\sigma(1)<\cdots<\sigma(L)$ and the image of $\sigma'$ is $\text{Im}(\sigma')=\{1,\,\cdots,\,2N\}\backslash \text{Im}(\sigma)$ satisfying $\sigma'(1)<\cdots<\sigma'(2M)$. Moreover, the sum $\sum_\sigma$ represents the sum of all the possibilities of $L$ integer partitions, contributing a factor $C_{2N}^L$. 

Since the determinant has been expanded into terms involving species $\al$ and $\be$ independently, we can similarly factor the integrations
\begin{align*}
\frac{X^L}{L!M!}\int_{\mathbb{R}^L\times\mathbb{R}^M} e^{-E(\xi_1;\xi_2)}d\mu^L(\al)&d\mu^M(\be)=\sum_\sigma \text{sgn}\sigma\cdot \frac{X^L}{L!}\int_{\mathbb{R}^L}\left|\det\left(\al_j^{\sigma(j)}e^{-V_1(\al_j)}\right)\right|d\mu^L(\alpha)\\&\times \frac{1}{M!}\int_{\mathbb{R}^M}\left|\det\left(\beta_{j'}^{\sigma'(j')}e^{-V_2(\be_{j'})},\,\sigma'(j')\beta_{j'}^{\sigma'(j')-1}e^{-V_2(\be_{j'})}\right)\right|d\mu^M(\be).
\end{align*}
By making use of integration formulas due to de Bruijn \cite{dB55} for both factors,
the right hand side of the above equation is equal to
\begin{align*}
\sumsigma\cdot\Pf(X^2\tilde{m}_{\sigma(i),\sigma(j)})_{i,j=1}^L \Pf(\hat{m}_{\sigma'(i),\sigma'(j)})_{i,j=1}^{2M},
\end{align*}
where
\begin{align*}
&\tilde{m}_{i,j}=\int_{\mathbb{R}\times\mathbb{R}}y^{i-1}z^{j-1}\text{sgn}(z-y)e^{-V_1(y)-V_1(z)} \, d\mu(y)d\mu(z),\\
&\hat{m}_{i,j}=(j-i)\int_{\mathbb{R}}y^{i+j-3} e^{-2V_2(y)} \, d\mu(y).
\end{align*}
We can now make use of  \cite[Lemma 4.2]{stembridge90} to sum the Pfaffian according to
\begin{align*}
\sum_{\adm}\sumsigma\cdot\Pf(X^2\tilde{m}_{\sigma(i),\sigma(j)})_{i,j=1}^L \Pf(\hat{m}_{\sigma'(i),\sigma'(j)})_{i,j=1}^{2M}=\Pf(X^2\tilde{m}_{i,j}+\hat{m}_{i,j})_{i,j=1}^{2N}.
\end{align*}
By denoting $2m_{i+1,j+1}(X)=X^2 \tilde{m}_{i,j}+\hat{m}_{i,j}$, we complete the proof.
\end{proof}

\subsection{Skew orthogonal polynomials arising from the solvable mixed charge model}
In keeping with Proposition \ref{prop1}, we introduce the skew symmetric inner product 
\begin{align*}
\langle \cdot,\cdot\rangle_{X,\V}: \quad \mathbb{C}[z]\times\mathbb{C}[z]&\to\mathbb{C}\\
(z^i,\,z^j)&\mapsto m_{i,j}(X)=X^2\langle z^i, z^j\rangle_{1,V_1}+\langle z^i,z^j\rangle_{4,V_2}.
 \end{align*}
Here, with weight function $V_1,\,V_2$ specified in \eqref{skewinnerproduct}, the skew inner products $\langle \cdot,\cdot\rangle_\beta$ with $\beta=1,2,4$ are defined by
\begin{align*}
\langle \phi(z),\psi(z)\rangle_{1,\,V}&=\frac{1}{2}\int_{\mathbb{R}\times\mathbb{R}}\phi(y)\psi(z)\text{sgn}(z-y)e^{-V(y)-V(z)}d\mu(y)d\mu(z),\\
\langle \phi(z),\psi(z)\rangle_{2,\,V}&=\int_{\mathbb{R}}\phi(z)\psi(z)e^{-2V(z)}d\mu(z),\\
\langle \phi(z),\psi(z)\rangle_{4,\,V}&=\frac{1}{2}\int_{\mathbb{R}}(\phi(z)\partial_z\psi(z)-\partial_z\phi(z)\psi(z))e^{-2V(z)}d\mu(z).
\end{align*}
 Following the procedure of skew symmetric tri-diagonalisation \cite{adler99,Forrester19}, we know that the skew symmetric moment matrix $M:=\left(m_{i,j}(X)\right)_{i,j=0}^{2N-1}$ can be decomposed into $M=S^{-1}JS^{-\top}$, where $S$ is a lower triangular matrix with diagonals $1$ and $J$ is a block matrix of the form
\begin{align}\label{sbd}
J=\text{diag}\left\{
\left(\begin{array}{cc}
0&u_0\\
-u_0&0
\end{array}
\right),
\cdots,
\left(\begin{array}{cc}
0&u_{N-1}\\
-u_{N-1}&0
\end{array}
\right)
\right\},\,u_n=\frac{\tau_{2n+2}}{\tau_{2n}},\,
\tau_{2n}=\Pf(m_{i,j})_{i,j=0}^{2n-1}
\end{align}
with $\tau_0=1$. Due to the decomposition, we can introduce a family of polynomials $\{Q_n(z)\}_{n=0}^{2N-1}$ such that
\begin{align*}
Q_n(z)=(S\chi(z))_{n+1},\quad \chi(z)=(z^0,z^1,\cdots,z^{2N-1})^\top,
\end{align*}
where $(S\chi(z))_{n+1}$ denotes the $(n+1)$-th component of vector $(S\chi(z))$. These polynomials skew diagonalise $\langle\cdot,\cdot\rangle_{X,\V}$.
\begin{proposition}
The family of polynomials $\{Q_n(z)\}_{n=0}^{2N-1}$ are skew orthogonal with respect to the inner product $\langle\cdot,\cdot\rangle_{X,\V}$. In other words, we have
\begin{align}\label{sop}
\begin{aligned}
&\langle Q_{2n}(z),Q_{2m+1}(z)\rangle_{X,\V}=u_n\delta_{n,m},\\
&\langle Q_{2m}(z),Q_{2n}(z)\rangle_{X,\V}=\langle Q_{2m+1}(z),Q_{2n+1}(z)\rangle_{X,\V}=0 \end{aligned}
\end{align} 
for $n,\,m=0,1,\cdots,N-1$.
\end{proposition}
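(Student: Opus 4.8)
The plan is to reduce the skew orthogonality relations \eqref{sop} to a direct reading of the block structure of $J$, exploiting the fact that $\langle\cdot,\cdot\rangle_{X,\V}$ is bilinear and is completely encoded by the moment matrix $M$. First I would record that, by the definition of the inner product on monomials together with bilinearity, any two polynomials $P(z)=\sum_{i=0}^{2N-1}p_i z^i$ and $R(z)=\sum_{j=0}^{2N-1}r_j z^j$ satisfy
\begin{align*}
\langle P,R\rangle_{X,\V}=\sum_{i,j=0}^{2N-1}p_i r_j\, m_{i,j}(X)=\mathbf{p}^\top M\,\mathbf{r},
\end{align*}
where $\mathbf{p}=(p_0,\dots,p_{2N-1})^\top$ and $\mathbf{r}=(r_0,\dots,r_{2N-1})^\top$ are the coefficient vectors in the monomial basis $\chi(z)$. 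Since $Q_n(z)=(S\chi(z))_{n+1}=\sum_{b=1}^{2N}S_{n+1,b}\,z^{b-1}$, the coefficient vector of $Q_n$ is precisely the $(n+1)$-th row of $S$, so that for all $0\le m,n\le 2N-1$,
\begin{align*}
\langle Q_m,Q_n\rangle_{X,\V}=\big(S\,M\,S^\top\big)_{m+1,n+1}.
\end{align*}

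Next I would invoke the decomposition $M=S^{-1}JS^{-\top}$ from \eqref{sbd}, which immediately yields $SMS^\top=J$ and hence $\langle Q_m,Q_n\rangle_{X,\V}=J_{m+1,n+1}$. The verification of \eqref{sop} then amounts to reading off the entries of the block-diagonal matrix $J$: the $(n+1)$-th $2\times 2$ block occupies rows and columns $2n+1,2n+2$, so $J_{2n+1,2n+2}=u_n$, $J_{2n+2,2n+1}=-u_n$, and all remaining entries vanish. For the first relation, $\langle Q_{2n},Q_{2m+1}\rangle_{X,\V}$ corresponds to the entry $J_{2n+1,2m+2}$, which is nonzero only when the two indices lie in a common block, i.e. $n=m$, giving $u_n\delta_{n,m}$. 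For the remaining two relations, $\langle Q_{2m},Q_{2n}\rangle_{X,\V}$ and $\langle Q_{2m+1},Q_{2n+1}\rangle_{X,\V}$ correspond to $J_{2m+1,2n+1}$ and $J_{2m+2,2n+2}$ respectively, both of which vanish: within any single block the two diagonal entries are zero, and distinct blocks do not overlap.

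The argument is essentially algebraic once the bilinear representation is in place, so I expect the only genuine care to be in the index bookkeeping — the offset between the polynomial label $n$ and the matrix row $n+1$, together with the even/odd partition of indices that matches $Q_{2n}$ to the ``first slot'' and $Q_{2n+1}$ to the ``second slot'' of the $(n+1)$-th block of $J$. It is also worth confirming en route that $M$ is genuinely skew symmetric, $m_{i,j}(X)=-m_{j,i}(X)$: swapping $i\leftrightarrow j$ and relabelling the integration variables $y\leftrightarrow z$ in the first term of \eqref{skewinnerproduct} uses that $\text{sgn}$ is odd, while the factor $(j-i)/2$ in the second term changes sign directly. This ensures the decomposition \eqref{sbd} is well posed and that the block form of $J$ is the correct normal form under congruence by the unit lower-triangular $S$.
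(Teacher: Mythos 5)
Your proposal is correct and follows essentially the same route as the paper's own (very brief) proof: both arguments amount to writing $\langle Q_m,Q_n\rangle_{X,\V}=(SMS^\top)_{m+1,n+1}=J_{m+1,n+1}$ via the decomposition $M=S^{-1}JS^{-\top}$ and then reading off the block-diagonal form of $J$. Your additional verification that $M$ is skew symmetric is a reasonable sanity check but does not change the substance of the argument.
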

\begin{proof}
By denoting $Q(z)=(Q_0(z),\cdots,Q_{2N-1}(z))^\top=S\chi(z)$, one can easily find $$\langle Q(z), Q^\top(z)\rangle_{X,\V}=S\langle \chi(z),\chi^\top(z)\rangle_{X,\V}S^\top=SMS^\top=J.$$
\end{proof}
\begin{remark}
Since the skew orthogonal relation \eqref{sop} is equivalent to a linear system, one can obtain a Pfaffian form of the skew orthogonal polynomials following the procedure in \cite[Section 2]{chang18}. Since we mainly consider the relation between the skew orthogonal polynomials and orthogonal polynomials in this paper, we omit the Pfaffian forms here.
\end{remark}

The skew symmetric inner product defined in \eqref{skewinnerproduct} is a linear combination of skew symmetric inner products induced by random matrices with orthogonal and symplectic invariance (i.e. $\beta=1,4$ respectively). As is known from \cite{adler00} , the skew orthogonal polynomials with $\beta=1,4$ can be expressed as a linear combination of orthogonal polynomials in the classical cases (a precise meaning of `classical' is
given in the paragraph above Proposition \ref{prop4}).  Therefore, we seek a relationship between the skew orthogonal polynomials arising from \eqref{skewinnerproduct} and orthogonal polynomials.
Important for this purpose is the following proposition.

\begin{proposition}\label{prop3}
For specific coefficients $\{\zeta_j\}_{j=0}^\infty$, if $\{p_j(z)\}_{j=0}^\infty$ are monic polynomials satisfying
\begin{align}\label{rel1}
\begin{aligned}
\left\{\begin{array}{l}
\langle p_{2m}(z),p_{2n+1}(z)-\zeta_n p_{2n-1}(z)\rangle_{1,V_1}=\ho_n\delta_{n,m},\\
\langle p_{2m}(z),p_{2n}(z)\rangle_{1,V_1}=\langle p_{2n+1}(z)-\zeta_n p_{2n-1}(z),p_{2m+1}(z)-\zeta_m p_{2m-1}(z)\rangle_{1,V_1}=0,\\
\langle p_{2m}(z),p_{2n+1}(z)\rangle_{4,V_2}=\hf_{2n}\delta_{n,m}-\hf_{2m-1}\delta_{m-1,n},\\
\langle p_{2m}(z),p_{2n}(z)\rangle_{4,V_2}=\langle p_{2m+1}(z),p_{2n+1}(z)\rangle_{4,V_2}=0,
\end{array}
\right.
\end{aligned}
\end{align}
then there exist a family of monic skew orthogonal polynomials $\{Q_j(z)\}_{j=0}^\infty$ 
\begin{align}\label{new}
Q_{2m}(z)=\frac{1}{\al_m}\sum_{k=0}^m \al_k p_{2k}(z),\quad Q_{2m+1}(z)=\frac{1}{\xi_m}\sum_{k=0}^m \xi_k(p_{2k+1}(z)-\zeta_k p_{2k-1}(z)),
\end{align}
skew orthogonal under the skew symmetric inner product $\langle\cdot,\cdot\rangle_{X,\V}=X^2\langle \cdot,\cdot\rangle_{1,V_1}+\langle \cdot,\cdot\rangle_{4,V_2}$. Moreover, the coefficients $\{\al_j\}_{j=0}^\infty$ and $\{\xi_j\}_{j=0}^\infty$ satisfy the  recurrence relations
\begin{subequations}
\begin{align}
&\hf_{2j+1}\al_{j+1}-\left(X^2\ho_j+\hf_{2j}+\zeta_j\hf_{2j-1}\right)\al_j+\zeta_j\hf_{2j-2}\al_{j-1}=0\label{ra}\\
&\zeta_{j+1}\hf_{2j}\xi_{j+1}-\left(X^2\ho_j+\hf_{2j}+\zeta_j\hf_{2j-1}\right)\xi_j+\hf_{2j-1}\xi_{j-1}=0\label{rx}
\end{align}
\end{subequations}
with initial values $\al_{-1}=\xi_{-1}=0$ and $\al_0=\xi_0=1$.
\end{proposition}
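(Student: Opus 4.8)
The plan is to substitute the ansatz \eqref{new} into the three defining relations \eqref{sop} and exploit the bilinearity of $\langle\cdot,\cdot\rangle_{X,\V}=X^2\langle\cdot,\cdot\rangle_{1,V_1}+\langle\cdot,\cdot\rangle_{4,V_2}$ together with the orthogonality pattern \eqref{rel1}. First I would dispose of the two ``diagonal-block'' relations $\langle Q_{2m},Q_{2n}\rangle_{X,\V}=0$ and $\langle Q_{2m+1},Q_{2n+1}\rangle_{X,\V}=0$. Expanding each $Q_{2m}$ as a combination of the $p_{2k}$, and each $Q_{2m+1}$ as a combination of the blocks $p_{2k+1}-\zeta_k p_{2k-1}$, every resulting pairing is between two even-indexed polynomials or between two odd-indexed ones; by the second and fourth lines of \eqref{rel1} such pairings vanish in both $\langle\cdot,\cdot\rangle_{1,V_1}$ and $\langle\cdot,\cdot\rangle_{4,V_2}$, so these relations hold term-by-term regardless of the coefficients. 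All the content therefore sits in the mixed relation $\langle Q_{2n},Q_{2m+1}\rangle_{X,\V}=u_n\delta_{n,m}$.

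For the mixed relation I would isolate the building block
\[
T_{k,l}:=\langle p_{2k},\,p_{2l+1}-\zeta_l p_{2l-1}\rangle_{X,\V}=X^2\langle p_{2k},\,p_{2l+1}-\zeta_l p_{2l-1}\rangle_{1,V_1}+\langle p_{2k},\,p_{2l+1}-\zeta_l p_{2l-1}\rangle_{4,V_2}.
\]
The $\beta=1$ contribution reduces to $\ho_l\delta_{k,l}$ by the first line of \eqref{rel1}, while the $\beta=4$ contribution is read off from the third line, and the two combine into the tridiagonal array
\[
T_{k,l}=\big(X^2\ho_l+\hf_{2l}+\zeta_l\hf_{2l-1}\big)\delta_{k,l}-\hf_{2l+1}\delta_{k,l+1}-\zeta_l\hf_{2l-2}\delta_{k,l-1}.
\]
Abbreviating the diagonal entry by $d_l:=X^2\ho_l+\hf_{2l}+\zeta_l\hf_{2l-1}$, the mixed inner product becomes $\frac{1}{\al_n\xi_m}\sum_{k=0}^n\sum_{l=0}^m\al_k\xi_l T_{k,l}$, a double sum against a tridiagonal matrix.

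The key point I would then make is that the recurrence \eqref{rx} is precisely the condition that the inner sum $\sum_{l=0}^m\xi_l T_{k,l}$ vanishes for every $k$ in the interior range $0\le k\le m-1$ (the endpoint $k=0$ using $\xi_{-1}=0$), leaving only the residues $\zeta_{m+1}\hf_{2m}\xi_{m+1}$ at $k=m$ and $-\hf_{2m+1}\xi_m$ at $k=m+1$. Multiplying by $\al_k$ and summing over $0\le k\le n$ then localises the double sum to these two top indices. When $n<m$ both lie beyond the summation range and the result is identically zero; when $n=m$ only the $k=m$ term survives, giving $u_n=\zeta_{n+1}\hf_{2n}\xi_{n+1}/\xi_n$; and when $n>m$ the two terms combine into $\al_m\zeta_{m+1}\hf_{2m}\xi_{m+1}-\al_{m+1}\hf_{2m+1}\xi_m$, which must be shown to vanish.

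The hard part will be this last cancellation, which I expect to be a discrete Wronskian (Casoratian) identity tying the two recurrences together. I would set $A_j:=\hf_{2j+1}\al_{j+1}\xi_j-\zeta_{j+1}\hf_{2j}\al_j\xi_{j+1}$, multiply \eqref{ra} by $\xi_j$ and \eqref{rx} by $\al_j$, and subtract: the common middle term $d_j\al_j\xi_j$ cancels and the remaining four terms reorganise exactly into $A_j-A_{j-1}=0$. Hence $A_j$ is independent of $j$, and the initial data $\al_{-1}=\xi_{-1}=0$, $\al_0=\xi_0=1$ (which force $\hf_1\al_1=d_0=\zeta_1\hf_0\xi_1$) give $A_0=0$, so $A_j\equiv0$. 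This kills the $n>m$ residue and simultaneously shows that the two natural formulas for $u_n$ coincide, so $u_n$ is well defined. It remains only to note that the recurrences \eqref{ra}, \eqref{rx} (with the stated leading coefficients nonzero in the classical cases) determine $\{\al_j\}$ and $\{\xi_j\}$ uniquely from the initial data, and that monicity of each $Q_j$ follows by reading off the top-degree term in \eqref{new}, namely $\al_m p_{2m}/\al_m$ and $\xi_m p_{2m+1}/\xi_m$. Thus the recurrences enter not as extra hypotheses but as the precise conditions under which the telescoping of the double sum succeeds.
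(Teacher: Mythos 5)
Your proof is correct, but it runs in the opposite direction to the paper's. The paper \emph{derives} the form \eqref{new}: it expands $Q_{2m}$ and $Q_{2m+1}$ over the full basis $\{p_{2j},\,p_{2j+1}-\zeta_j p_{2j-1}\}$, imposes only the one-sided orthogonality conditions ($j<m$ or $j\le m$), shows the unwanted cross-coefficients vanish, and reads off \eqref{ra} and \eqref{rx} as the resulting constraints; the two-sided skew orthogonality then follows from skew symmetry together with the triangular structure of the expansions, so no compatibility identity between the two recurrences is ever needed. You instead \emph{verify} the ansatz: you posit \eqref{new} with coefficients defined by the recurrences and check all pairings directly, which forces you to handle the $n>m$ residue $\al_m\zeta_{m+1}\hf_{2m}\xi_{m+1}-\al_{m+1}\hf_{2m+1}\xi_m$ via the Casoratian $A_j\equiv 0$. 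That identity is correct (your telescoping $A_j-A_{j-1}=0$ and the initial condition $\hf_1\al_1=d_0=\zeta_1\hf_0\xi_1$ both check out), and it is in fact equivalent to the paper's Corollary \ref{cjj}, i.e.\ to $\xi_j=c_j^{-1}\al_j$ with $c_j$ as in \eqref{cj} --- so what appears in the paper as a separate corollary is, in your route, an indispensable step of the existence proof. The trade-off: your argument is a cleaner self-contained sufficiency check and makes explicit why the two recurrences are mutually consistent, while the paper's argument additionally shows that the form \eqref{new} is essentially forced by skew orthogonality (uniqueness up to the $Q_{2m+1}\mapsto Q_{2m+1}+c_mQ_{2m}$ freedom), which your verification does not address. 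Since the proposition only asserts existence, your proof suffices; the one hypothesis you share implicitly with the paper is that $\al_m,\xi_m\neq 0$ so the normalisations in \eqref{new} make sense.
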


\begin{proof}
Since $\{p_{2j}(z),\, p_{2j+1}(z)-\zeta_jp_{2j-1}(z)\}_{j=0}^\infty$ form a basis of $L^2(d\mu)$, we can write the monic polynomials $\{Q_{j}(z)\}_{j=0}^\infty$ as linear combinations of this basis
\begin{subequations}
\begin{align}
&Q_{2m}(z)=\sum_{k=0}^m \alm_kp_{2k}(z)+\sum_{k=0}^{m-1}\bem_k(p_{2k+1}(z)-\zeta_kp_{2k-1}(z)),\quad \alm_m=1,\label{6.eq1}\\
&Q_{2m+1}(z)=\sum_{k=0}^m \xim_k(p_{2k+1}(z)-\zeta_kp_{2k-1}(z))+\sum_{k=0}^m \etam_kp_{2k}(z),\quad \xim_m=1. \label{6.eq2}
\end{align}
\end{subequations}
The skew orthogonal relations \eqref{sop} can be written as \footnote{For the third condition, the skew orthogonal relation only hold for $j<m$ since we haven't assume $Q_{2m+1}(z)$ is independent of $Q_{2m}(z)$ yet.}
\begin{subequations}
\begin{align}
&\langle Q_{2m}(z),p_{2j+1}(z)-\zeta_jp_{2j-1}(z)\rangle_{X,\V}=0,\quad \text{for $j<m$};\label{eq1}\\
&\langle Q_{2m}(z),p_{2j}(z)\rangle_{X,\V}=0,\quad \text{for $j\leq m$};\label{eq2}\\
&\langle Q_{2m+1}(z),p_{2j+1}(z)-\zeta_jp_{2j-1}(z)\rangle_{X,\V}=0,\quad \text{for $j<m$};\label{eq3}\\
&\langle Q_{2m+1}(z),p_{2j}(z)\rangle_{X,\V}=0,\quad \text{for $j<m$}.\label{eq4}
\end{align}
\end{subequations}
From (\ref{6.eq1}) and (\ref{6.eq2}), the equation \eqref{eq1} is equivalent to
\begin{align*}
\sum_{k=0}^m\alm_k\langle p_{2k}(z),p_{2j+1}(z)-\zeta_jp_{2j-1}(z)\rangle_{X,\V}=0,\quad \text{for $j<m$}
\end{align*}
since $\langle p_{2k+1}(z)-\zeta_kp_{2k-1}(z),p_{2j+1}(z)-\zeta_jp_{2j-1}(z)\rangle_{X,V}=0$ for arbitrary $j,k\in\mathbb{N}$ according to \eqref{rel1}. Making further use of \eqref{rel1} shows
\begin{align*}
X^2\sum_{k=0}^m\alm_k\ho_k\delta_{k,j}+\sum_{k=0}^m \alm_k(\hf_{2j}\delta_{k,j}-\hf_{2k-1}\delta_{k-1,j})-\sum_{k=0}^m\alm_k\zeta_j(\hf_{2j-2}\delta_{k,j-1}-\hf_{2k-1}\delta_{k,j})=0,
\end{align*}
and thus
\begin{align}\label{a8}
\hf_{2j+1}\alm_{j+1}-\left(X^2\ho_j+\hf_{2j}+\zeta_j\hf_{2j-1}\right)\alm_j+\zeta_j\hf_{2j-2}\alm_{j-1}=0,\quad\text{for $j<m$}.
\end{align}
In fact, the above equation is independent of $m$ and the solution is determined by the boundary values $\alm_{-1}=0$ and $\alm_m=1$. An efficient way is to set $\alm_j=\al_j/\al_m$, showing all of the coefficients satisfy the same recurrence relation \eqref{ra} with initial data 
\begin{equation}\label{da1}
\al_{-1}=0, \qquad \al_0=1
\end{equation}
 and 
\begin{align*}
Q_{2m}(z)=\frac{1}{\al_m}\sum_{k=0}^m \al_k p_{2k}(z)+\sum_{k=0}^{m-1}\bem_k(p_{2k+1}(z)-\zeta_kp_{2k-1}(z)).
\end{align*}
Now substituting the above form of $Q_{2m}(z)$ into \eqref{eq2}, one can see 
\begin{align*}
\sum_{k=0}^{m-1}\bem_k\langle p_{2k+1}(z)-\zeta_kp_{2k-1}(z),p_{2j}(z)\rangle_{X,V}=0,
\end{align*}
which leads us to
\begin{align*}
-X^2\sum_{k=0}^{m-1}\bem_k\ho_k\delta_{k,j}&-\sum_{k=0}^{m-1}\bem_k(\hf_{2k}\delta_{k,j}-\hf_{2j-1}\delta_{j-1,k})\\
&+\sum_{k=0}^{m-1}\bem_k\zeta_k(\hf_{2k-2}\delta_{k-1,j}-\hf_{2j-1}\delta_{k,j})=0,\quad \text{for $j\leq m$}.
\end{align*}
Taking $j$ from $m$ to $0$ iteratively we find $\bem_{m-1}=\cdots=\bem_0=0$. Therefore, from \eqref{eq1} and \eqref{eq2}, we obtain
\begin{align*}
Q_{2m}(z)=\frac{1}{\al_m}\sum_{k=0}^m\al_kp_{2k}(z),
\end{align*}
where $\{\al_k\}$ satisfy \eqref{ra} with initial values $\al_{-1}=0$ and $\al_0=1$.

Now we turn to equations \eqref{eq3} and \eqref{eq4}. First consider  \eqref{eq3}.
From (\ref{6.eq1}) and (\ref{6.eq2}), this is equivalent to
\begin{align*}
\sum_{k=0}^m \etam_k\langle p_{2k}(z),p_{2j+1}(z)-\zeta_jp_{2j-1}(z)\rangle_{X,\V}=0,\quad \text{for $j<m$}.
\end{align*}
Again, by the use of conditions \eqref{rel1}, one can show that this implies
\begin{align*}
\hf_{2j+1}\etam_{j+1}-\left(X^2\ho_j+\hf_{2j}+\zeta_j\hf_{2j-1}\right)\etam_j+\zeta_j\hf_{2j-2}\etam_{j-1}=0,\quad \text{for $j<m$},
\end{align*}
demonstrating that $\{\etam_j\}$ admit the same recurrence relation with $\{\alm_j\}$ and therefore
\begin{align*}
Q_{2m+1}(z)=\sum_{k=0}^m\xim_k(p_{2k+1}(z)-\zeta_kp_{2k-1}(z))+c_mQ_{2m}(z)
\end{align*}
with a constant $c_m$. A basic property of skew orthogonal polynomials, is that the transformation $Q_{2m+1}(z) \mapsto Q_{2m+1}(z)+c_mQ_{2m}(z)$ does not change the skew orthogonality
(this explains the arbitrary constant $c_n$ in (\ref{1.3a}) and (\ref{1.3b})), and therefore we are free to set $c_m = 0$, giving
\begin{align*}
Q_{2m+1}(z)=\sum_{k=0}^m \xim_k(p_{2k+1}(z)-\zeta_kp_{2k-1}(z)).
\end{align*}
We now make use of \eqref{eq4} by substituting this form to obtain
\begin{align*}
-X^2\sum_{k=0}^m \xim_k\ho_k\delta_{k,j}-\sum_{k=0}^m\xim_k(\hf_{2k}\delta_{k,j}-\hf_{2j-1}\delta_{k,j-1})+\sum_{k=0}^m \xim_k\zeta_k(\hf_{2k-2}\delta_{k-1,j}-\hf_{2j-1}\delta_{k,j})=0
\end{align*}
\text{for $j<m$}. Analogous to the analysis of (\ref{a8}) we set  $\xim_k=\xi_k/\xi_m$ to conclude
\begin{align*}
Q_{2m+1}(z)=\frac{1}{\xi_m}\sum_{k=0}^m \xi_k(p_{2k+1}(z)-\zeta_kp_{2k-1}(z)),
\end{align*}
where $\{\xi_k\}$ satisfy \eqref{rx} with initial values $\xi_{-1}=0$ and $\xi_0=1$.
\end{proof}

The coupling between the coefficients $\{\al_j\}$ and $\{\xi_j\}$ implied by the recurrences \eqref{ra} and \eqref{rx}
can be made more explicit.
\begin{corollary}\label{cjj}
We have
\begin{align}\label{cj}
 \xi_j=c_{j}^{-1}\al_j, \quad \text{where}\quad c_j=\prod_{k=1}^j\zeta_k\frac{\hf_{2k-2}}{\hf_{2k-1}}.
\end{align}
\end{corollary}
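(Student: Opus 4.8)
The plan is to recognise \eqref{cj} as the assertion that the explicitly defined sequence $\tilde\xi_j := c_j^{-1}\al_j$ coincides with the solution $\{\xi_j\}$ of the recurrence \eqref{rx} subject to $\xi_{-1}=0$, $\xi_0=1$. Because \eqref{rx} is a three-term recurrence with nonvanishing leading coefficient $\zeta_{j+1}\hf_{2j}$, it determines $\{\xi_j\}$ uniquely from its two initial values; so it is enough to check that $\tilde\xi_j$ obeys the same recurrence and the same initial data. The initial data is immediate: the empty product gives $c_0=1$, hence $\tilde\xi_0=\al_0=1=\xi_0$, while $\al_{-1}=0$ gives $\tilde\xi_{-1}=0=\xi_{-1}$.

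For the recurrence itself I would substitute $\xi_k\mapsto c_k^{-1}\al_k$ into the left-hand side of \eqref{rx} and clear denominators by multiplying through by $c_j$. The outcome involves only the ratios $c_j/c_{j+1}$ and $c_j/c_{j-1}$, which from the definition of $c_j$ in \eqref{cj} are
\[
\frac{c_j}{c_{j+1}}=\frac{1}{\zeta_{j+1}}\frac{\hf_{2j+1}}{\hf_{2j}},
\qquad
\frac{c_j}{c_{j-1}}=\zeta_j\frac{\hf_{2j-2}}{\hf_{2j-1}}.
\]
Inserting these, the coefficient of $\al_{j+1}$ simplifies from $\zeta_{j+1}\hf_{2j}\,(c_j/c_{j+1})$ to $\hf_{2j+1}$, the coefficient of $\al_{j-1}$ simplifies from $\hf_{2j-1}\,(c_j/c_{j-1})$ to $\zeta_j\hf_{2j-2}$, and the middle coefficient $-(X^2\ho_j+\hf_{2j}+\zeta_j\hf_{2j-1})$ is untouched. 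The whole expression therefore becomes the left-hand side of \eqref{ra}, which vanishes because $\{\al_j\}$ solves \eqref{ra}. This confirms that $\tilde\xi_j$ satisfies \eqref{rx}, and uniqueness closes the argument.

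The step most likely to require care is not conceptual but bookkeeping: one must ensure that the denominators entering $c_j$ and the leading coefficient of \eqref{rx} do not vanish, i.e. $\zeta_k\neq0$, $\hf_{2k-1}\neq0$ and $\hf_{2j}\neq0$ for the relevant indices, so that $c_j$ is well defined and invertible and \eqref{rx} is genuinely of second order. In the classical cases this holds because these quantities are explicit nonzero normalisations. An equivalent and perhaps cleaner presentation replaces the uniqueness appeal by a direct induction on $j$: the two base cases above, together with the reduction of \eqref{rx} to \eqref{ra} at each step, propagate the identity $\xi_j=c_j^{-1}\al_j$ for all $j$.
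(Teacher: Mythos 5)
Your proposal is correct and is essentially the paper's own argument: both rest on the same substitution $\xi_j = c_j^{-1}\al_j$ and the two ratio identities $\zeta_{j+1}\hf_{2j}\,c_j/c_{j+1}=\hf_{2j+1}$ and $\hf_{2j-1}\,c_j/c_{j-1}=\zeta_j\hf_{2j-2}$ that convert \eqref{rx} into \eqref{ra}. The only difference is one of direction — the paper derives $c_j$ up to a constant $C$ from the consistency requirement and fixes $C=1$ via $\al_0=\xi_0=1$, whereas you verify the explicitly given $c_j$ and invoke uniqueness of the recurrence solution — which is an equivalent reading of the same computation.
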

\begin{proof}
Note that if we make the transformation of variables $\xi_j=c_j^{-1}\hat{\xi}_j$,  the recurrence relation \eqref{rx} reads
\begin{align*}
\zeta_{j+1}\hf_{2j}\frac{c_{j}}{c_{j+1}}\hat{\xi}_{j+1}-\left(X^2\ho_j+\hf_{2j}+\zeta_j\hf_{2j-1}\right)\hat{\xi}_j+\hf_{2j-1}\frac{c_{j}}{c_{j-1}}\hat{\xi}_{j-1}=0.
\end{align*}
To be consistent with \eqref{ra}, it is further required
\begin{align*}
\zeta_{j+1}\hf_{2j}\frac{c_j}{c_{j+1}}=\hf_{2j+1},\quad \hf_{2j-1}\frac{c_j}{c_{j-1}}=\zeta_j\hf_{2j-2}.
\end{align*}
Hence
\begin{align*}
c_j= C \prod_{k=1}^j\zeta_k\frac{\hf_{2k-2}}{\hf_{2k-1}},\quad \text{and} \quad \xi_j=c_j^{-1}\al_j,
\end{align*}
for some constant $C$.
By noting $\al_0=\xi_0=1$, it follows that $C = 1$, thus completing  the proof.
\end{proof}

The condition of proposition \eqref{prop3} is very general. Thus, starting with two arbitrary  skew symmetric inner products $\langle \cdot, \cdot\rangle_{s_1}$ and $\langle \cdot,\cdot\rangle_{s_2}$, if a family of (not necessarily orthogonal)  polynomials $\{p_j(z)\}_{j=0}^\infty$ satisfy the condition \eqref{rel1}, then we can define a new skew symmetric inner product $\langle \cdot, \cdot\rangle_{new}=X^2\langle\cdot,\cdot\rangle_{s_1}+\langle \cdot,\cdot\rangle_{s_2}$ with corresponding skew orthogonal polynomials specified by \eqref{new}, \eqref{ra} and \eqref{rx}. 

The task then is to specify families of skew symmetric inner products and polynomials $\{p_j(z)\}_{j=0}^\infty$ satisfying \eqref{rel1}.
In fact we find that all of the classical monic orthogonal polynomials $\{p_j(x)\}_{j=0}^\infty$ with weight $e^{-2V}$, characterised by the Pearson equation $2V'(z)=g(z)/f(z)$ with $\text{deg}~ g\leq 1$ and $\text{deg}~ f\leq 2$, satisfy the condition \eqref{rel1} with
\begin{equation}\label{214}
V_1=V+\frac{1}{2}\log f, \qquad V_2=V-\frac{1}{2}\log f. 
\end{equation}
It is remarkable that for the classical weights, the procedure to find the relations between $\beta=1,\,4$ and $\beta=2$ is to construct a linear operator \cite{adler00}
\begin{align*}
\mathcal{A}=f\partial_z+\frac{f'-g}{2}
\end{align*}
such that $\langle \phi, \mathcal{A}^{-1}\psi\rangle_{2,V}=-\langle \phi,\psi\rangle_{1,V_1}$ and $\langle \phi,\mathcal{A}\psi\rangle_{2,V}=\langle \phi,\psi\rangle_{4,V_2}$. In this case, the inner product $\langle\cdot,\cdot\rangle_{X,V_1,V_2}$ can be constructed from the operator $X^2 \mathcal{A}^{-1}+\mathcal{A}$, but it needs more computations which are not considered in this paper.

\begin{proposition}\label{prop4}
All classical orthogonal polynomials, in the sense of the requirement on the Pearson equation as specified above, satisfy \eqref{rel1}, where the relations (\ref{214}) are assumed.
\end{proposition}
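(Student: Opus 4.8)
The plan is to build everything on the skew-adjointness of $\ma=f\partial_z+\tfrac12(f'-g)$ with respect to the symmetric product $\langle\cdot,\cdot\rangle_{2,V}$. This is immediate from the two quoted operator identities: combining $\langle\phi,\ma\psi\rangle_{2,V}=\langle\phi,\psi\rangle_{4,V_2}$ with the skew-symmetry of $\langle\cdot,\cdot\rangle_{4,V_2}$ and the symmetry of $\langle\cdot,\cdot\rangle_{2,V}$ gives $\langle\phi,\ma\psi\rangle_{2,V}=-\langle\ma\phi,\psi\rangle_{2,V}$. First I would use this, together with the degree bound $\deg\ma p_n\le n+1$ (valid since $\deg f\le2$, $\deg(f'-g)\le1$), to pin down the action of $\ma$ on the monic orthogonal polynomials $\{p_n\}$ for $\langle\cdot,\cdot\rangle_{2,V}$: the bound kills $\langle p_m,\ma p_n\rangle_{2,V}$ for $m>n+1$, skew-adjointness kills it for $m<n-1$, and skew-adjointness forces the diagonal coefficient to vanish, leaving $\ma p_n=a_np_{n+1}+b_np_{n-1}$ with $a_n=(n+1)f_2-\tfrac12 g_1$ and $b_n=-a_{n-1}h_n/h_{n-1}$, where $h_n=\langle p_n,p_n\rangle_{2,V}$ and $f_2,g_1$ are leading coefficients. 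That $\ma$ flips the index parity, linking $p_n$ only to $p_{n\pm1}$, is the structural engine of the whole argument.

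With this in hand the two $\beta=4$ relations and the odd-index $\beta=1$ relations are essentially bookkeeping. For $\beta=4$ I would substitute $\langle p_m,p_n\rangle_{4,V_2}=\langle p_m,\ma p_n\rangle_{2,V}$: the vanishing of $\langle p_{2m},p_{2n}\rangle_{4,V_2}$ and $\langle p_{2m+1},p_{2n+1}\rangle_{4,V_2}$ is then automatic, since no Kronecker delta with equal-parity arguments can fire, while $\langle p_{2m},p_{2n+1}\rangle_{4,V_2}$ reproduces the claimed two-term shape and identifies $\hf_{2n}=-a_{2n}h_{2n+1}$, $\hf_{2n-1}=-a_{2n-1}h_{2n}$. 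For the odd $\beta=1$ relations I would read off $\ma p_{2n}=a_{2n}p_{2n+1}+b_{2n}p_{2n-1}$ and choose $\zeta_n=-b_{2n}/a_{2n}$, precisely so that $p_{2n+1}-\zeta_np_{2n-1}=a_{2n}^{-1}\ma p_{2n}$. Feeding this into $\langle\phi,\ma^{-1}\psi\rangle_{2,V}=-\langle\phi,\psi\rangle_{1,V_1}$ and using $\ma^{-1}\ma p_{2n}=p_{2n}$ gives $\langle p_{2m},p_{2n+1}-\zeta_np_{2n-1}\rangle_{1,V_1}=-h_{2n}a_{2n}^{-1}\delta_{m,n}$, so $\ho_n=-h_{2n}/a_{2n}$; and $\langle p_{2n+1}-\zeta_np_{2n-1},p_{2m+1}-\zeta_mp_{2m-1}\rangle_{1,V_1}$ collapses to the $\langle\cdot,\cdot\rangle_{2,V}$-pairing of an odd-index combination against $p_{2m}$, hence vanishes.

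The main obstacle is the remaining relation $\langle p_{2m},p_{2n}\rangle_{1,V_1}=0$. Here the clean trick fails, because $p_{2n}$ has even index and so is not $\ma$ of any polynomial; correspondingly $\ma^{-1}p_{2n}$ — the integral operator $u(z)=e^{V_2(z)}\tfrac12\int\mathrm{sgn}(z-t)e^{-V_1(t)}\psi(t)\,d\mu(t)$, which does invert $\ma$ on polynomials once the boundary terms vanish by decay of the classical weight on its support — is genuinely non-polynomial, so naive orthogonality does not apply. I would handle this by descending the even ladder to write $p_{2n}=\ma S_n+C_n$ with $S_n\in\mathrm{span}\{p_1,p_3,\dots,p_{2n-1}\}$ and a leftover constant $C_n$ (the obstruction $p_0=1$ cannot be cleared). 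Then $\ma^{-1}p_{2n}=S_n+C_n\ma^{-1}p_0$, whence $\langle p_{2m},p_{2n}\rangle_{1,V_1}=-\langle p_{2m},S_n\rangle_{2,V}-C_n\langle p_{2m},\ma^{-1}p_0\rangle_{2,V}$; the first term is $0$ by index parity, and for the second I would write $p_{2m}=\ma S_m+C_m$ and use skew-adjointness to reduce $\langle p_{2m},\ma^{-1}p_0\rangle_{2,V}$ to $-\langle S_m,p_0\rangle_{2,V}+C_m\langle p_0,\ma^{-1}p_0\rangle_{2,V}$, both of which vanish — the first by index parity, the second because $\langle p_0,\ma^{-1}p_0\rangle_{2,V}=-\langle p_0,p_0\rangle_{1,V_1}=0$ by skew-symmetry of the $\beta=1$ product. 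This disposes of the last case and, with the identifications of $\ho_n$, $\hf_j$ and $\zeta_n$ above, verifies all four lines of \eqref{rel1}.
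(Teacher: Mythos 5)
Your proposal is correct and rests on the same machinery as the paper: the operator $\ma=f\partial_z+\tfrac12(f'-g)$, its skew-adjointness with respect to $\langle\cdot,\cdot\rangle_{2,V}$, and the resulting index-parity-flipping three-term action on the orthogonal polynomials (the paper invokes this via the Al-Salam--Chihara relation $fp_n'=a_np_{n+1}+b_np_n+c_np_{n-1}$ cited from \cite{alsalam72}, which is equivalent to your $\ma p_n=a_np_{n+1}+b_np_{n-1}$ once the $\tfrac12(f'-g)p_n$ term is absorbed and skew-adjointness kills the diagonal). The one place where you do substantively more than the paper is the $\beta=1$ block: the paper disposes of the first two lines of \eqref{rel1} by citing \cite{adler00} (with the construction only sketched in Appendix~\ref{sop1}), and in particular never verifies $\langle p_{2m},p_{2n}\rangle_{1,V_1}=0$ explicitly, whereas your descent $p_{2n}=\ma S_n+C_n$ along the odd ladder, reducing everything to parity orthogonality plus $\langle p_0,\ma^{-1}p_0\rangle_{2,V}=-\langle p_0,p_0\rangle_{1,V_1}=0$, gives a clean self-contained argument for it. The only caveat, which you flag but should be made precise, is that every use of $\ma^{-1}$ and of skew-adjointness relies on vanishing boundary terms in the integrations by parts; this is exactly where the specific choices $V_1=V+\tfrac12\log f$, $V_2=V-\tfrac12\log f$ and the decay or endpoint behaviour of each classical weight on its support must be checked case by case.
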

\begin{proof}
The proof of this claim is based on the Pearson equation. The first two relations have been demonstrated in \cite{adler00}---the key idea is to use the Pearson pair $(f,g)$ to find the skew orthogonal polynomials with $\beta=1$; a detailed procedure can be found in the Appendix \ref{sop1}. Moreover, the quantities $\{\zeta_j\}_{j=0}^\infty$ and $\{\ho_j\}_{j=0}^\infty$ are computed from the procedure. The last two relations can also be obtained from the Pearson pair. In \cite{alsalam72}, it has been demonstrated that the classical orthogonal polynomials $\{p_n(z)\}_{n=0}^\infty$ which are characterised by the Pearson pair $(f,g)$ must satisfy
\begin{align*}
f(z)p_n'(z)=a_np_{n+1}(z)+b_np_n(z)+c_np_{n-1}(z),
\end{align*}
where $\{a_n\}_{n=0}^\infty$, $\{b_n\}_{n=0}^\infty$, $\{c_n\}_{n=0}^\infty$ are three sequences independent of $z$. Therefore, one can obtain
\begin{align*}
\langle p_{2m}(z),p_{2n}(z)\rangle_{4,V_2}&=\langle p_{2m+1}(z),p_{2n+1}(z)\rangle_{4,V_2}=0,\\
\langle p_{2m}(z),p_{2n+1}(z)\rangle_{4, V_2}&=\frac{1}{2}\langle p_{2m}(z),f(z)p'_{2n+1}(z)\rangle_{2,V}-\frac{1}{2}\langle f(z)p'_{2m}(z),p_{2n+1}(z)\rangle_{2, V}\\
&=\frac{1}{2}(c_{2n+1}h_{2n}-a_{2n}h_{2n+1})\delta_{n,m}-\frac{1}{2}(c_{2m}h_{2m-1}-a_{2m-1}h_{2m})\delta_{m,n+1},
\end{align*}
where $h_n=\langle p_n(z),p_n(z)\rangle_{2,V}$. We complete the proof by defining $\hf_j=\frac{1}{2}(c_{j+1}h_j-a_jh_{j+1})$.
\end{proof}

In conclusion to this section, we demonstrate that the partition function $Z_N(X)$ in \eqref{partition} can be calculated by use of the skew orthogonal polynomials constructed in Proposition \ref{prop3}, thus
exhibiting the solvability of the model. To begin, note that
\begin{align*}
Z_N(X)\overset{\eqref{partition}}{=}2^N\tau_{2N}\overset{\eqref{sbd}}{=}2^N\prod_{j=0}^{N-1}u_j\overset{\eqref{sop}}{=}2^N\prod_{j=1}^N\langle Q_{2j}, Q_{2j+1}\rangle_{X,\V}.
\end{align*}
Substituting the skew orthogonal polynomials $\{Q_j(z)\}_{j=0}^\infty$ in terms of orthogonal polynomials via \eqref{new} shows
\begin{align*}
2^N&\prod_{j=0}^{N-1}\langle Q_{2j},Q_{2j+1}\rangle_{X,\V}=2^N\prod_{j=0}^{N-1}\frac{1}{\al_j\xi_j}\sum_{k_1=0}^j\sum_{k_2=0}^j \al_{k_1}\xi_{k_2}\langle p_{2k_1},p_{2k_2+1}-\zeta_{k_2}p_{2k_2-1}\rangle_{X,\V}.
\end{align*}
Making the use of \eqref{rel1}, we see this is equal to
\begin{align*}
2^N\prod_{j=0}^{N-1} \frac{1}{\al_j\xi_j}\left(
\sum_{k=0}^j \al_k\xi_k(X^2\ho_k+\hf_{2k}+\zeta_k\hf_{2k-1})-\sum_{k=1}^j \al_{k-1}\xi_k\zeta_k\hf_{2k-2}-\sum_{k=1}^j\al_k\xi_{k-1}\hf_{2k-1}
\right).
\end{align*}
By taking into account the recurrence relation \eqref{ra}, we conclude
\begin{align*}
Z_N(X)=2^N \al_N \prod_{j=0}^{N-1}\hf_{2j+1}.
\end{align*}

In the classical cases, $\{\hf_{j}\}_{j=0}^\infty$ are known normalisations, expressible in terms of quantities relating to the corresponding orthogonal polynomials \cite{adler00}. We will show below that in these cases $\al_N$, and thus the partition function, can be written in terms of 
hypergeometric orthogonal polynomials. Interestingly, this class of polynomials has appeared in a recent work in the moments of the spectral density for the classical weights with $\beta=2$ \cite{cunden2019}. 

\section{The classical cases}

We know from Proposition \ref{prop4} that the classical weights allow for a structured form of the 
skew orthogonal polynomials generated in terms of the classical orthogonal polynomials.
Here we will show that the coefficients $\{\al_j\}_{j=0}^\infty$ and $\{\xi_j\}_{j=0}^\infty$ can be expressed in terms of hypergeometric orthogonal polynomials from the Askey-Wilson table. 
There are four classical weights: the Gaussian, Laguerre, Jacobi and generalised Cauchy. These will be considered in turn.

\subsection{Gaussian case}
The two-component plasma for mixed charges with Gaussian weight has already been considered in \cite{Rider12}. We restate the main results for the completeness of our work. It is a standard result that the monic Hermite polynomials $\{p_k(z)\}_{k=0}^\infty$ satisfy the orthogonal relation
\begin{align*}
\langle p_k(z), p_j(z)\rangle_{2,V}=2^{-k}\sqrt{\pi}\Gamma(k+1)\delta_{k,j},
\end{align*}
where the weight function is given by $e^{-2V(z)}=e^{-z^2}$. From the weight function, we can compute the Pearson pair as $(f,g)=(1,2z)$, and therefore $e^{-V_1(z)}=e^{-V_2(z)}=z^2/2$, meaning that the external potentials of these two different particles are exactly the same.
Moreover, from \cite[Equation 3.2]{adler00}, we have
\begin{align*}
\langle p_{2m},p_{2n+1}-np_{2n-1}\rangle_{1,V_1}=2^{-2n}\sqrt{\pi}\Gamma(2n+1)\delta_{n,m},
\end{align*}
and from the derivative formula $\partial_zp_k(z)=kp_{k-1}(z)$, one knows
\begin{align*}
\langle p_{2m}, p_{2n+1}\rangle_{4,V_2}&=\frac{1}{2} (2n+1)\langle p_{2m},p_{2n}\rangle_{2,V_2}-\frac{1}{2} (2m)\langle p_{2m-1},p_{2n+1}\rangle_{2,V_2}\\
&=2^{-2n-1}\sqrt{\pi}\Gamma(2n+2)\delta_{n,m}-2^{-2m}\sqrt{\pi}\Gamma(2m+1)\delta_{m-1,n}.
\end{align*}
Therefore, the coefficients of the Hermite polynomials with regard to the  relations \eqref{rel1} are
\begin{align}\label{gauss}
\zeta_j=j, \quad \ho_j=2^{-2j}\sqrt{\pi}\Gamma(2j+1),\quad \hf_{j}=2^{-j-1}\sqrt{\pi}\Gamma(j+2).
\end{align}
Substituting them into the recurrence relation \eqref{ra}, one finds that the coefficients $\{\al_j\}$ satisfy
\begin{align*}
(2j+2)(2j+1)\al_{j+1}-2(2X^2+4j+1)\al_j+4\al_{j-1}=0,
\end{align*}
which, upon scaling $\al_j=2^{2j}\frac{j!}{(2j)!}\hat{\al}_j$, reduces to
\begin{align*}
(j+1)\hat{\al}_{j+1}-({X^2}+2j+\frac{1}{2})\hat{\al}_j+(j-\frac{1}{2})\hat{\al}_{j-1}=0.
\end{align*}
This equation, in combination with the initial conditions (\ref{da1}),
 can be recognised as the three term recurrence satisfied by the
Laguerre polynomials with parameter $-1/2$, and so $\hat{\al}_j=L_j^{-1/2}(-X^2)$, where $\{L_n^a(z)\}_{n=0}^\infty$ denotes the standard (not monic!) Laguerre polynomials with parameter $a$ through the orthogonal relation
\begin{align*}
\int_0^\infty L_n^a(z)L_m^a(z)z^{a}e^{-z}dx=\frac{\Gamma(a+n+1)}{\Gamma(n+1)}\delta_{n,m}.
\end{align*}
Therefore, we can conclude that
\begin{align*}
\al_j=2^{2j}\frac{j!}{(2j)!}L_j^{-\frac{1}{2}}(-{X^2}).
\end{align*}
With regard of the coefficients $\{\xi_j\}_{j=0}^\infty$, since
\begin{align*}
\zeta_{j+1}\hf_{2j}=(j+1) 2^{-2j-1}\sqrt{\pi}\Gamma(2j+2)=2^{-2j-2}\sqrt{\pi}\Gamma(2j+3)=\hf_{2j+1},
\end{align*}
one sees that $\{\al_j\}_{j=0}^\infty$ and $\{\xi_j\}_{j=0}^\infty$ satisfy the same recurrence relation with same initial values. Therefore,
\begin{align*}
\al_j=\xi_j=2^{2j}\frac{j!}{(2j!)}L_j^{-\frac{1}{2}}(-X^2),
\end{align*} 
as demonstrated in \cite[Section 4]{Rider12}.

\subsection{Laguerre case}
The Laguerre case is to consider the weight function $e^{-2V(z)}=z^ae^{-z}$, supported on $[0,\infty)$.
The Pearson pair of the Laguerre weight is $(f,g)=(z,z-a)$, and therefore, according to 
(\ref{214})
the external potentials are $e^{-V_1(z)}=z^{(a-1)/2}e^{-z/2}$ and $e^{-V_2(z)}=z^{(a+1)/2}e^{-z/2}$.

For the monic Laguerre polynomials \footnote{The monic Laguerre polynomials are given in terms of the Laguerre polynomials by $p_k^{(a)}(z)=(-1)^kk!L_k^{(a)}(z)$} $\{p_k^{(a)}(z)\}_{k=0}^\infty$, we have
\begin{align}\label{laguerre}
\begin{aligned}
&\langle p_m^{(a)},p_n^{(a)}\rangle_{2,V}=\Gamma(n+1)\Gamma(n+a+1)\delta_{n,m},\\
&p_n^{(a-1)}(z)=p_n^{(a)}(z)+np_{n-1}^{(\al)}(z),\quad \frac{d}{dx}p_n^{(a)}(z)=np_{n-1}^{(a+1)}(z).
\end{aligned}
\end{align}
From \cite[Equation 3.6]{adler00} or \cite[Equation 4.31]{nagao07}, one knows
\begin{align*}
\langle p_{2m}^{(a)},p_{2n+1}^{(a)}-2n(2n+a)p_{2n-1}^{(a)}\rangle_{1,V_1}=2\Gamma(2n+1)\Gamma(2n+a+1)\delta_{n,m}.
\end{align*}
Properties of the Laguerre polynomials \eqref{laguerre} tell us
\begin{align*}
&\langle p_{2m}^{(a)},p_{2n+1}^{(a)}\rangle_{4,V_2}=\frac{1}{2} (2n+1)\langle p_{2m}^{(a)},p_{2n+1}^{(a+1)}\rangle_{2,V_2}-\frac{1}{2} (2m)\langle p_{2m-1}^{(a+1)},p_{2n}^{(a)}\rangle_{2,V_2}\\
&=\frac{1}{2} (2n+1)\langle p_{2m}^{(a+1)}+2mp_{2m-1}^{(a+1)},p_{2n}^{(a+1)}\rangle_{2,V_2}-\frac{1}{2} (2m) \langle p_{2m-1}^{(a+1)},p_{2n+1}^{(a+1)}+(2n+1)p_{2n}^{(a+1)}\rangle_{2,V_2}\\
&=\frac{1}{2}\Gamma(2n+2)\Gamma(2n+a+2)\delta_{n,m}-\frac{1}{2}\Gamma(2m+1)\Gamma(2m+a+1)\delta_{m-1,n}.
\end{align*}
Therefore, the coefficients in \eqref{rel1} for the Laguerre case  are
\begin{align}\label{lcoe}
\zeta_j=2j(2j+a), \quad \ho_j=2\Gamma(2j+1)\Gamma(2j+a+1),\quad \hf_j=\frac{1}{2}\Gamma(j+2)\Gamma(j+a+2).
\end{align}
Substituting in the recurrence relation \eqref{ra} shows
\begin{align*}
(2j+1)&(2j+2)(2j+a+1)(2j+a+2)\al_{j+1}\\&-(4X^2+(2j+1)(2j+a+1)+2j(2j+a))\al_j+\al_{j-1}=0.
\end{align*}
Scaling $\al_{j}=\frac{1}{(2j)!}\hat{\al}_j$ gives that $\{\hat{\al}_j\}_{j=0}^\infty$  satisfy
\begin{align*}
(j+\frac{a}{2}+1)(j+\frac{a}{2}+\frac{1}{2})\hat{\al}_{j+1}-\left({X^2}+(j+\frac{1}{2})(j+\frac{a}{2}+\frac{1}{2})+j(j+\frac{a}{2})\right)\hat{\al}_j+j(j-\frac{1}{2})\hat{\al}_{j-1}=0.
\end{align*}
Together with the initial condition (\ref{da1}), we recognise this three term recurrence as that satisfied by a particular
continuous dual Hahn polynomial \cite[Section 1.3]{koekoek96} 
\begin{align*}
\hat{\al}_j=\tilde{S}_j\left(-{X^2};\frac{a+1}{2},\frac{1}{2},0\right)
\end{align*}
or hypergeometric function
\begin{align*}
\hat{\al}_j={_3}F_2\!\left(\left.\begin{array}{c}
-j,\, \frac{a+1}{2}-X,\, \frac{a+1}{2}+X\\
\frac{a+2}{2},\,\frac{a+1}{2}
\end{array}\right|1
\right).
\end{align*}
Hence, for the coefficients $\{\al_j\}_{j=0}^\infty$ we have
\begin{align*}
\al_j=\frac{1}{(2j)!}\tilde{S}_j\left(-{X^2};\frac{a+1}{2},\frac{1}{2},0\right).
\end{align*}
As for  the Laguerre case we can check 
$
\zeta_{j+1}\hf_{2j}=\hf_{2j+1},
$
and thus the recurrence relations of $\{\xi_j\}_{j=0}^\infty$ are the same with $\{\al_j\}_{j=0}^\infty$. Therefore,
\begin{align*}
\xi_j=\frac{1}{(2j)!}\tilde{S}_j\left(-X^2;\frac{a+1}{2},\frac{1}{2},0\right).
\end{align*}

\subsection{Jacobi case}\label{jacobi}
The weight function in the Jacobi case is  $e^{-2V(z)}=(1-z)^a(1+z)^b$ supported on the interval $(-1,1)$. This is more general than the Gaussian and Laguerre weights, in the
sense that the latter can be obtained as limiting cases. 
For the Pearson pair, we have $(f,g)=\left(1-z^2,(a+b)z+(a-b)\right)$, and therefore the external potentials are $e^{-V_1(z)}=(1-z)^{(a-1)/2}(1+z)^{(b-1)/2}$ and $e^{-V_2(z)}=(1-z)^{(a+1)/2}(1+z)^{(b+1)/2}$.

The monic Jacobi polynomials $\{p_k^{(a,b)}(z)\}_{k=0}^\infty$ have the well known  properties
\begin{align}\label{jacobi}
\begin{aligned}
&\langle p_m^{(a,b)},p_n^{(a,b)}\rangle_{2,V}=h_n^{(a,b)}\delta_{n,m},\quad\frac{d}{dz}p_n^{(a,b)}(z)=np_{n-1}^{(a+1,b+1)}(z),\\
&p_n^{(a,b)}(z)=p_n^{(a+1,b+1)}(z)+\mu_n^{(a,b)}p_{n-1}^{(a+1,b+1)}(z)+\epsilon_n^{(a,b)}p_{n-2}^{(a+1,b+1)}(z),
\end{aligned}
\end{align}
where the coefficients are 
\begin{align}\label{coeff}
\begin{aligned}
h_n^{(a,b)}&=2^{a+b+2n+1}\frac{\Gamma(n+1)\Gamma(n+a+1)\Gamma(n+b+1)\Gamma(n+a+b+1)}{\Gamma(2n+a+b+1)\Gamma(2n+a+b+2)},\\
\mu_n^{(a,b)}&=\frac{2n}{(2n+a+b+1)}\left(-\frac{n+b+1}{2n+a+b+2}+\frac{n+a}{2n+a+b}\right),\\
\epsilon_n^{(a,b)}&=-\frac{4n(n-1)(n+a)(n+b)}{(2n+a+b+1)(2n+a+b)^2(2n+a+b-1)}.
\end{aligned}
\end{align}
From \cite[Equation 3.14]{adler00}, one knows
\begin{align*}
\langle p_{2m}^{(a,b)},p_{2n+1}^{(a,b)}-\zeta_n^{(a,b)}p_{2n-1}^{(a,b)}\rangle_{1,V_1}=\ho_n\delta_{n,m}
\end{align*}
with
\begin{align*}
\zeta_n^{(a,b)}&=\frac{8n(2n+a)(2n+b)(2n+a+b)}{(4n+a+b-1)(4n+a+b)(4n+a+b+1)(4n+a+b+2)},\\
\ho_n&=2^{a+b+4n+2}\frac{\Gamma(2n+1)\Gamma(2n+a+1)\Gamma(2n+b+1)\Gamma(2n+a+b+1)}{\Gamma(4n+a+b+1)\Gamma(4n+a+b+3)}.
\end{align*}
Furthermore, from the properties \eqref{jacobi}, one can compute
\begin{align}\label{hfj}
\begin{aligned}
&\langle p_{2m}^{(a,b)},p_{2n+1}^{(a,b)}\rangle_{4,V_2}=\frac{1}{2}(2n+1)\langle p_{2m}^{(a,b)},p_{2n}^{(a+1,b+1)}\rangle_{2,V_2}-\frac{1}{2}(2m)\langle p_{2m-1}^{(a+1,b+1)},p_{2n+1}^{(a,b)}\rangle_{2,V_2}\\
&=\frac{1}{2}(2n+1)\langle p_{2m}^{(a+1,b+1)}+\mu^{(a,b)}_{2m}p_{2m-1}^{(a+1,b+1)}+\epsilon^{(a,b)}_{2m}p_{2m-2}^{(a+1,b+1)},p_{2n}^{(a+1,b+1)}\rangle_{2,V_2}\\
&\quad-\frac{1}{2}(2m)\langle p_{2m-1}^{(a+1,b+1)},p_{2n+1}^{(a+1,b+1)}+\mu^{(a,b)}_{2n+1}p_{2n}^{(a+1,b+1)}+\epsilon^{(a,b)}_{2n+1}p_{2n-1}^{(a+1,b+1)}\rangle_{2,V_2}\\
&=\frac{1}{2}\left(
(2n+1)h_{2n}^{(a+1,b+1)}-(2n)\epsilon^{(a,b)}_{2n+1}h_{2n-1}^{(a+1,b+1)}
\right)\delta_{n,m}\\
&\quad -\frac{1}{2}\left(
(2m)h_{2m-1}^{(a+1,b+1)}-(2m-1)\epsilon^{(a,b)}_{2m}h_{2m-2}^{(a+1,b+1)}
\right)\delta_{m-1,n},
\end{aligned}
\end{align}
where $h_n^{(a,b)}$ and $\epsilon_n^{(a,b)}$ are defined in \eqref{coeff}. As a consequence,
\begin{align*}
\zeta_j&=\frac{8j(2j+a)(2j+b)(2j+a+b)}{(4j+a+b-1)(4j+a+b)(4j+a+b+1)(4j+a+b+2)},\\
\ho_j&=2^{a+b+4j+2}\frac{\Gamma(2j+1)\Gamma(2j+a+1)\Gamma(2j+b+1)\Gamma(2j+a+b+1)}{\Gamma(4j+a+b+1)\Gamma(4j+a+b+3)},\\
\hf_j&=2^{a+b+2j+2}\frac{\Gamma(j+2)\Gamma(j+a+2)\Gamma(j+b+2)\Gamma(j+a+b+2)}{\Gamma(2j+a+b+2)\Gamma(2j+a+b+4)}.
\end{align*}
Substituting  into the recurrence relation \eqref{ra}, one finds that the coefficients $\{\al_j\}_{j=0}^\infty$ satisfy the recurrence
\begin{align*}
&\left(X^2+\frac{(2j+1)(2j+a+1)(2j+b+1)(2j+a+b+1)}{(4j+a+b+1)(4j+a+b+3)}+\frac{2j(2j+a)(2j+b)(2j+a+b)}{(4j+a+b-1)(4j+a+b+1)}\right)\al_j\\
&=4\frac{(2j+1)_2(2j+a+1)_2(2j+b+1)_2(2j+a+b+1)_2}{(4j+a+b+1)_3(4j+a+b+3)_3}\al_{j+1}+\frac{1}{4}{(4j+a+b-2)(4j+a+b)}\al_{j-1},
\end{align*}
where the Pochhammer symbol $(a)_k$ is defined by
\begin{align*}
(a)_0=1,\quad \text{and}\quad (a)_k=a(a+1)\cdots(a+k-1),\, k=1,2,3,\cdots.
\end{align*} 
Making the substitution
\begin{align*}
\al_j=2^{-2j}\frac{(4j+a+b+1)!}{(2j)!(2j+b)!}\hat{\al}_{j}
\end{align*}
gives the modified recurrence
\begin{align*}
&\left(\frac{X^2}{4}+\frac{(j+\frac{1}{2})(j+\frac{a+1}{2})(j+\frac{b+1}{2})(j+\frac{a+b+1}{2})}{(2j+\frac{a+b+3}{2})(2j+\frac{a+b+1}{2})}+\frac{j(j+\frac{a}{2})(j+\frac{b}{2})(j+\frac{a+b}{2})}{(2j+\frac{a+b+1}{2})(2j+\frac{a+b-1}{2})}\right)\hat{\al}_j\\
&
=\frac{(j+\frac{a+2}{2})(j+\frac{a+1}{2})(j+\frac{a+b+2}{2})(j+\frac{a+b+1}{2})}{(2j+\frac{a+b+3}{2})(2j+\frac{a+b+1}{2})}\hat{\al}_{j+1}+\frac{j(j-\frac{1}{2})(j+\frac{b}{2})(j+\frac{b-1}{2})}{(2j+\frac{a+b+1}{2})(2j+\frac{a+b-1}{2})}\hat{\al}_{j-1}.
\end{align*}

This modified recurrence, together with the initial conditions (\ref{da1}), is recognised as being satisfied by a
special case of the Wilson polynomials \cite[Section 1.1]{koekoek96}, telling us that
\begin{align*}
\hat{\al}_j=\tilde{W}_j\left(-\frac{X^2}{4};\frac{a+1}{2},\frac{b+1}{2},\frac{1}{2},0\right)
\end{align*}
or in terms of a hypergeometric function
\begin{align*}
\hat{\al}_j={_4}F_3\!\left(
\left.\begin{array}{c}
-j,\, j+\frac{a+b+1}{2},\, \frac{a+1-X}{2},\, \frac{a+1+X}{2}\\
\frac{a+1}{2},\,\frac{a+2}{2},\,\frac{a+b+2}{2}
\end{array}\right|1
\right).
\end{align*}
Hence
\begin{align*}
\al_j=2^{-2j}\frac{(4j+a+b+1)!}{(2j)!(2j+b)!}\tilde{W}_j\left(
-\frac{X^2}{4};\frac{a+1}{2},\frac{b+1}{2},\frac{1}{2},0
\right).
\end{align*}

In contrast to the Gaussian and Laguerre cases, in the Jacobi case $\zeta_j\hf_{2j}\not=\hf_{2j+1}$.
This means the recurrence relation for $\{\xi_j\}_{j=0}^\infty$ is different to that for $\{\al_j\}_{j=0}^\infty$.
However, one can make use of Corollary \ref{cjj} to conclude that if we change $\hat{\xi}_j=c_j\xi_j$, then $\{\hat{\xi}_j\}_{j=0}^\infty$ admits the same recurrence as $\{\al_j\}_{j=0}^\infty$. 
In the Jacobi  case, $c_j$ can be computed via \eqref{cj} and we find
\begin{align*}
c_j=\prod_{k=1}^j \zeta_k\frac{\hf_{2k-2}}{\hf_{2k-1}}=\prod_{k=1}^j \frac{4k+a+b-2}{4k+a+b+2}=\frac{2+a+b}{4j+a+b+2}.
\end{align*}
Therefore, $\xi_j=c_j^{-1}\al_j$ and 
\begin{align*}
\xi_j=2^{-2j}\frac{(4j+a+b+2)!}{(2j)!(2j+b)!(2+a+b)}\tilde{W}_j\left(-\frac{X^2}{4};\frac{a+1}{2},\frac{b+1}{2},\frac{1}{2},0\right).
\end{align*}

\subsection{Generalised Cauchy case}\label{3.4}
In this subsection, we consider the remaining of the classical weights on the line, namely the
generalised Cauchy weight $e^{-2V(z)}=(1-iz)^c(1+iz)^{\bar{c}}$ with $z\in\mathbb{R}$, $c\in\mathbb{C}$. In the one-component case, this weight is in fact related to the circular Jacobi ensemble \cite{liu17,forrester01}, whose probability density function (PDF) is defined on $[0,2\pi)^N$. Thus if we consider the stereographic projection
\begin{align*}
e^{i\theta_j}=\frac{1-ix_j}{1+ix_j},
\end{align*}
then we can transform the eigenvalue PDF on the unit circle
\begin{align*}
\prod_{j=1}^N (1-e^{i\theta_j})^{\beta a}(1-e^{-i\theta_j})^{\beta \bar{a}}\prod_{1\leq j<k\leq N}|e^{i\theta_k}-e^{i\theta_j}|^{\beta}d\theta_1\cdots d\theta_N
\end{align*}
 to the  eigenvalue PDF on the real line
\begin{align*}
\prod_{j=1}^N (1-ix_j)^{c}(1+ix_j)^{\bar{c}}\prod_{1\leq j<k\leq N}|x_k-x_j|^\beta dx_1\cdots dx_N.
\end{align*}

Denoting the real part and imaginary part of $c$ as $\Re(c)=-p$ and $\Im(c)=q$ respectively, the Pearson pair for the generalised Cauchy weight is seen to be $(f,g)=\left(1+z^2,2pz+2iq\right)$. Therefore, the external potentials of the
two-component model are given by $e^{-V_1(z)}=(1-iz)^{(c-1)/2}(1+iz)^{(\bar{c}-1)/2}$ and $e^{-V_2(z)}=(1-iz)^{(c+1)/2}(1+iz)^{(\bar{c}+1)/2}$. We remark that the orthogonal polynomials with 
the generalised Cauchy weight are usually referred to as (a special case of) the  Routh-Romanovski polynomials \cite{masjed12}. Since these polynomials are not well known, we give a brief introduction in the appendix, together with the construction of the corresponding skew orthogonal polynomials in the case $\beta=1$.

Specifically, we give in \eqref{rr1} the explicit forms of the $\zeta_j$ and $\ho_j$ for the  Routh-Romanovski polynomials.
Moreover, from the relationship between Routh-Romanovski polynomials and Jacobi polynomials \eqref{rj}, and the properties of Jacobi polynomials \eqref{jacobi}, we know 
\begin{align*}
\I_n^{(c,\bar{c})}(z)=\I_n^{(c+1,\bar{c}+1)}(z)+\mu_n^{c}\I^{(c+1,\bar{c}+1)}_{n-1}(z)+\epsilon_n^c \I_{n-2}^{(c+1,\bar{c}+1)}(z)
\end{align*}
with coefficients $\mu_n^c=i\mu_n^{(a,b)}|_{a\to c,\,b\to \bar{c}}$ and $\epsilon_n^c=i^2\epsilon_n^{(a,b)}|_{a\to c,\, b\to\bar{c}}$, where $\mu_n^{(a,b)}$ and $\epsilon_n^{(a,b)}$ are given in \eqref{coeff}.
Therefore, in analogy with the  computation in \eqref{hfj}, $\hf_{2j}$ in the Routh-Romanovski case can be computed as
\begin{align*}
\hf_{2j}&=\frac{1}{2}\left((2j+1)h_{2j}^{(c-1,\bar{c}-1)}-2n\epsilon_{2j+1}^c h_{2j-1}^{(c-1,\bar{c}-1)}\right)\\
&=2^{4j-2p+3}\pi \frac{\Gamma(2j+2)\Gamma(2p-4j-1)\Gamma(2p-4j-3)}{\Gamma(2p-2j-1)\Gamma(p-1-iq-2j)\Gamma(p-1+iq-2j)}.
\end{align*}
Taking $\zeta_j$, $\ho_j$ and $\hf_j$ into the equation \eqref{ra} and dividing $\hf_j$ on both sides, one can obtain
\begin{align*}
\left(X^2+\frac{(2j+1)(2p-2j-1)(-c-2j-1)(-\bar{c}-2j-1)}{(2p-4j-1)(2p-4j-3)}+\frac{2j(2p-2j)(-c-2j)(-\bar{c}-2j)}{(2p-4j-1)(2p-4j+1)}\right)\al_j\\
=2^2\frac{(2j+1)_2(2p-2j-2)_2(-c-2j-2)_2(-\bar{c}-2j-2)_2}{(2p-4j-3)_3(2p-4j-5)_3}\al_{j+1}+2^{-2}(2p-4j)(2p-4j+2)\al_{j-1}.
\end{align*}
If we make the change of variable 
\begin{align*}
\al_j=(-1)^j2^{-2j}\frac{(-c-2j-1)!}{(2j)!(2p-4j-2)!}\hat{\al}_j,
\end{align*}
the above equation is transformed into
\begin{align*}
&\left(-\frac{X^2}{4}+\frac{(j+\frac{1}{2})(j-p+\frac{1}{2})(j+\frac{c+1}{2})(j+\frac{\bar{c}+1}{2})}{(2j-p+\frac{3}{2})(2j-p+\frac{1}{2})}+\frac{j(j-p)(j+\frac{c}{2})(j+\frac{\bar{c}}{2})}{(2j-p+\frac{1}{2})(2j-p-\frac{1}{2})}
\right)\hat{\al}_j\\&=\frac{(j-p+1)(j-p+\frac{1}{2})(j+\frac{\bar{c}+1}{2})(j+\frac{\bar{c}+2}{2})}{(2j-p+\frac{3}{2})(2j-p+\frac{1}{2})}\hat{\al}_{j+1}+\frac{j(j-\frac{1}{2})(j+\frac{c}{2})(j+\frac{c-1}{2})}{(2j-p+\frac{1}{2})(2j-p-\frac{1}{2})}\hat{\al}_{j-1}.
\end{align*}
The solution of this recurrence relation can also be expressed as a Wilson polynomial
\begin{align*}
\hat{\al}_j=\tilde{W}_j\left(
\frac{X^2}{4};\frac{\bar{c}+1}{2},\frac{c+1}{2},\frac{1}{2},0
\right)
\end{align*}
in analogy with the Jacobi case, and hence 
\begin{align*}
\al_j=(-1)^j2^{-2j}\frac{\Gamma(-c-2j)}{\Gamma(2j+1)\Gamma(2p-4j-1)}\tilde{W}_j\left(
\frac{X^2}{4};\frac{\bar{c}+1}{2},\frac{c+1}{2},\frac{1}{2},0
\right).
\end{align*}
The computation of $\{\xi_j\}$ is based on the Corollary \ref{cjj} and equation \eqref{cj}. One can show
\begin{align*}
\xi_j=\frac{2p-4j-2}{2p-2}\al_j=\frac{(-1)^j 2^{-2j}\Gamma(-c-2j)}{(2p-2)\Gamma(2j+1)\Gamma(2p-4j-2)}\tilde{W}_j\left(
\frac{X^2}{4};\frac{\bar{c}+1}{2},\frac{c+1}{2},\frac{1}{2},0
\right).
\end{align*}

\section{Concluding remarks}
The focus of the paper has been on the skew orthogonal polynomial theory arising from the two-component plasma for the mixed charge on the line in the case of classical weight functions,
and so on giving a unified framework including the Gaussian case considered in \cite{Rider12}. It has been shown that for all of the classical cases the skew orthogonal polynomials can be written as a linear combination of the classical orthogonal polynomials, with coefficients and partition functions being expressed in terms of Askey-scheme hypergeometric orthogonal polynomials, which we summarise in Table \ref{t1}.
\begin{table}[htbp]
\caption{Coefficients with regard to classical weights}
\begin{tabular}{cc}\hline\hline
Classical weights &  Coefficients/Partition functions\\ \hline
Gaussian & Laguerre polynomials ($_1F_1$)\\
Laguerre & continuous Hahn polynomials ($_3 F_2$)\\
Jacobi & Wilson polynomials ($_4F_3$)\\
Generalised Cauchy & Wilson polynomials ($_4F_3$)\\
\hline\hline
\end{tabular}
\label{t1}
\end{table}

Still, a number of questions remain for further investigation:
\begin{enumerate}
\item In random matrix theory, one motivation for knowledge of the expansion of the classical skew orthogonal polynomials in terms of classical orthogonal polynomials 
is that the Christoffel-Darboux kernel for the skew orthogonal polynomials has a simpler form in terms of orthogonal polynomials,
as shown in \cite{widom99,adler00}; see also \cite{Forrester06} in the setting of a combinatorial model. However, an analogous evaluation of the kernel for the two-component plasma on a line is not known;
\item In the present paper, we only consider the case that the measure set is continuously supported in $\mathbb{R}$. One would like to develop a theory of the two-component plasma on the discrete (or $q$-) lattice, to include (perhaps) a broader class of the Askey-Wilson scheme. The difficulty here is to find the discrete (or $q$-) skew orthogonal polynomials with $\beta=1$, and thus to obtain the first two equations in relation \eqref{rel1} in a discrete setting (Note: the second two equations are naturally satisfied due to the discrete Pearson equation). Although a discrete orthogonal ensemble is given by \cite{borodin09}, or in terms of discrete Selberg integral by \cite{borodin16}, the skew orthogonal polynomials with $\beta=1$ are still unknown to us;
\item The model we consider in this paper can be viewed as a mixture of $\beta=1$ and $\beta=4$ ensembles in random matrix theory. Another known mixture model, defined on the circle and now interpolating between $\beta=2$ and $\beta=4$, is given by the Boltzmann factor
\begin{align*}
\prod_{1\leq j<k\leq N_1}|e^{i\theta_k}-e^{i\theta_j}|^2\prod_{1\leq\al<\be\leq N_2}|e^{i\phi_\al}-e^{i\phi_\be}|^4\prod_{j=1}^{N_1}\prod_{\al=1}^{N_2}|e^{i\theta_j}-e^{i\phi_\al}|^2.
\end{align*}
This is relevant to studies of the quantum Hall effect \cite{forrester842,forrester11}. Although this model has been shown to be solvable, and the skew orthogonal polynomials exhibited in \cite{forrester11}, it remains to similarly analyse the analogue of this mixture model on the real line.
\end{enumerate}

\section*{Acknowledgements} 
This work is part of a research program supported by the Australian Research Council (ARC) through the ARC Centre of Excellence for Mathematical and Statistical frontiers (ACEMS). PJF also acknowledges partial support from ARC grant DP170102028. S.-H. Li would like to thank ZiF, University of Bielefeld for their generous support when he attended the summer school `Randomness in Physics and Mathematics: From Stochastic Processes to Networks', where this work was begun.

\begin{appendix}
\section{An introduction to the Routh-Romanovski polynomials}

The Routh-Romanovski polynomials were introduced by Routh \cite{routh84} and Romanovski \cite{romanovski29} and have subsequently shown to be of relevance to several studies in physics \cite{Ra07}. This family of polynomials are orthogonal with respect to the weight function 
\begin{align*}
W^{(p,q)}(z;a,b,c,d)=\left(
(az+b)^2+(cz+d)^2
\right)^{-p}\exp\left(
2q\, \arctan \frac{az+b}{cz+d}
\right).
\end{align*}
In this paper, we restrict ourselves to the case 
\begin{align*}
e^{-2V(z)}:=(1-iz)^{c}(1+iz)^{\bar{c}}=(1+z^2)^{\Re(c)}\exp(2\Im(c)\arctan (z))=W^{\left(\Re(c), \Im(c)\right)}(z;1,0,0,1),
\end{align*}
where $\Re(c)$ and $\Im(c)$ denote the real and imaginary part of $c$.
The weight function is then simply related to the Jacobi weight and one can show that correspondingly
the monic Routh-Romanovski polynomials $\{\I_n^{(c,\bar{c})}(z)\}_{n=0}^\infty$ can be connected with the monic Jacobi polynomials (see Section \ref{jacobi}) via the relation
\begin{align}\label{rj}
\I_n^{(c,\bar{c})}(z)=i^{-n} p_n^{(c,\bar{c})}(iz).
\end{align}
If we denote $c=-p+iq$, then we have the orthogonal relation
\begin{align*}
\langle \I_n^{(c,\bar{c})}, \I_m^{(c,\bar{c})}\rangle_{2,V}=2^{2n-2p+2}\pi\frac{\Gamma(n+1)\Gamma(2p-2n)\Gamma(2p-2n-1)}{\Gamma(2p-n)\Gamma(p-iq-n)\Gamma(p+iq-n)}\delta_{n,m}:=h_n^{(c,\bar{c})}\delta_{n,m}.
\end{align*}
It should be noted that this family of orthogonal polynomials are valid for $n\in\mathbb{N}$ and $n<p$ only. 
The Pearson pair of this weight can be computed as 
\begin{align}\label{prr}
(f,g)=(1+z^2, 2pz+2iq).
\end{align}

\section{Construction of skew orthogonal polynomials with $\beta=1$ relating to the  Routh-Romanovski polynomials}\label{sop1}
A procedure to construct the classical skew orthogonal polynomials for the Hermite, Laguerre and Jacobi weights with $\beta=1,\, 4$ has been given in \cite{adler00}; a comprehensive review can be found in \cite[\S 6]{forrester10}. Later, the remaining classical weight function --- the generalised Cauchy weight --- was considered in \cite{forrester01} during studies of the circular ensemble. In this appendix, we will give a brief review of the
construction of the skew orthogonal polynomials with $\beta=1$ in the classical cases and apply it to the generalised Cauchy weight, to obtain the relations \eqref{rel1} in the Routh-Romanovski case.
Throughout the equations (\ref{214}) will be assumed.

For the classical weights, we can start with the Pearson pair $(f,g)$ and construct an operator $\mathcal{A}:=f\partial_z+(f'-g)/2$ of order at most $1$, such that
\begin{align*}
\mathcal{A}p_j(z)=-\frac{c_j}{\langle p_{j+1},p_{j+1}\rangle_{2,V}}p_{j+1}(z)+\frac{c_{j-1}}{\langle p_{j-1},p_{j-1}\rangle_{2,V}}p_{j-1}(z).
\end{align*}
This operator has the significant property of inter-relating the $\beta = 2$ inner product with the $\beta = 1$ and $\beta = 4$ skew inner products,
\begin{align*}
\langle \phi,\mathcal{A}\psi\rangle_{2,V}=\langle \phi,\psi\rangle_{4,V_2},\quad \langle \phi,\mathcal{A}^{-1}\psi\rangle_{2,V}=-\langle \phi,\psi\rangle_{1,V_1}.
\end{align*}
Specific attention is paid to the case $\beta = 1$. Denoting $\gamma_j=c_j/(\langle p_{j+1},p_{j+1}\rangle_{2,V}\langle p_j,p_j\rangle_{2,V})$, one finds the relation between the monic orthogonal polynomials $\{p_{j}(z)\}_{j=0}^\infty$ and the skew orthogonal polynomials $\{q_j(z)\}_{j=0}^\infty$ with $\beta=1$ as
\begin{align*}
q_{2j}(z)=p_{2j}(z),\quad q_{2j+1}(z)=p_{2j+1}(z)-\frac{\gamma_{2j-1}}{\gamma_{2j}}p_{2j-1}(z),
\end{align*}
and for the normalisation $\langle q_{2j}(z),q_{2j+1}(z)\rangle_{1,V_1}=1/\gamma_{2j}$. 

Therefore, from the Pearson pair \eqref{prr} of Routh-Romanovski polynomials, we can construct $\mathcal{A}=(1+z^2)\partial_z+(1-p)z-iq$. Since we consider the monic Routh-Romanovski polynomials throughout the paper, it is easy to compute the coefficients 
$
c_j=(p-1-j)\langle \I^{(c,\bar{c})}_{j+1},\I^{(c,\bar{c})}_{j+1}\rangle_{2,V}.
$
Therefore, one can obtain
\begin{align*}
\langle \I^{(c,\bar{c})}_{2m},\I^{(c,\bar{c})}_{2n+1}-\zeta_n\I^{(c,\bar{c})}_{2n-1}\rangle_{1,V_1}=\ho_n\delta_{n,m}
\end{align*}
where
\begin{align}\label{rr1}
\begin{aligned}
\zeta_n&=\frac{16n(p-n)(p-iq-2n)(p+iq-2n)}{(2p-4n-2)(2p-4n-1)(2p-4n)(2p-4n+1)},\\
\ho_n&=2^{4n-2p+3}\pi\frac{\Gamma(2n+1)\Gamma(2p-4n)\Gamma(2p-4n-2)}{\Gamma(2p-2n)\Gamma(p+iq-2n)\Gamma(p-iq-2n)}.
\end{aligned}
\end{align}

\end{appendix}

\small
\bibliographystyle{abbrv}

\end{document}